\DeclareMathAlphabet{\mathpzc}{OT1}{pzc}{m}{it}
\newcommand{\mathparedefine}{{k}}
\newcommand{\mathmainpartdefine}{{I}}
\newcommand{\mathxxxcssize}{\kappa}
\newcommand{\arc}[2]{(#1,#2)}
\newcommand{\fpt}{{{FPT}}}
\newcommand{\w}{W}
\newcommand{\wa}{{\w}[1]}
\newcommand{\wah}{{\w}[1]-hard}
\newcommand{\xp}{{{XP}}}
\newcommand{\wb}{{\w}[2]}
\newcommand{\wbh}{{\w}[2]-hard}
\newcommand{\wbhns}{{\w}[2]-hardness}
\newcommand{\wi}{{\w}[$i$]}
\newcommand{\wih}{{\w}[$i$]-hard}
\newcommand{\poly}{{{P}}}
\newcommand{\np}{NP}
\newcommand{\nph}{{{\np}-hard}}
\newcommand{\nphns}{{\np}-hardness}
\newcommand{\yes}{\sc{Yes}}
\newcommand{\no}{\sc{No}}
\newcommand{\yesins}{{\yes}-instance}
\newcommand{\noins}{{\no}-instance}
\newcommand{\trelation}{\succ}
\newcommand{\myresultemp}[1]{{\textbf{#1}}}
\newcommand{\ts}{\pi} 
\newcommand{\vc}{\varphi}
\newcommand{\tourproperty}[1]{TS-#1}
\newcommand{\copname}{Copeland set}
\newcommand{\topcname}{top cycle}
\newcommand{\Topcname}{Top cycle}
\newcommand{\ucname}{uncovered set}
\newcommand{\score}[3]{score(#1,#2,#3)} 
\newcommand{\eunion}[2]{{#1}+{#2}} 
\newcommand{\cop}[1]{CO(#1)}
\newcommand{\topc}[1]{TC(#1)}
\newcommand{\uc}[1]{UC(#1)}
\newcommand{\comments}[1]{}
\newcommand{\myvspace}[1]{}
\newcommand{\she}{she}
\newcommand{\herself}{herself}
\newcommand{\his}{his}
\newcommand{\myfig}[1]{Figure~\ref{#1}}
\newcommand{\onlyfull}[1]{}
\newcommand{\onlyaaai}[1]{#1}
\newtheorem{theorem}{Theorem}
\newtheorem{lemma}{Lemma}
\begin{document}

\title{Approval Voting with Intransitive Preferences}

\numberofauthors{1}

\author{
\alignauthor
Yongjie Yang\\
\affaddr{Chair of Economic Theory}\\
\affaddr{Saarland University, Saarbr\"{u}cken, Germany}\\
\email{yyongjiecs@gmail.com}
}

\maketitle

\begin{abstract}
We extend Approval voting to the settings where voters may have intransitive preferences. The major obstacle to applying Approval voting in these settings is that voters are not able to clearly determine who they should approve or disapprove, due to the intransitivity of their preferences. An approach to address this issue is to apply tournament solutions to help voters make the decision. We study a class of voting systems where first each voter casts a vote defined as a tournament, then a well-defined tournament solution is applied to select the candidates who are assumed to be approved by the voter. Winners are the ones receiving the most approvals. We study axiomatic properties of this class of voting systems and complexity of control and bribery problems for these voting systems.
\end{abstract}

\keywords{approval; tournament solution; voting system; complexity} 

\section{Introduction}
\label{sec:introduction}
Voting is a common method for preference aggregation and collective decision-making, and has significant applications in
multi-agent systems, political elections, web spam reduction, pattern recognition, etc.~\cite{DBLP:conf/www/DworkKNS01,Egan2014,Kalech2011,DBLP:journals/prl/LuminiN06}. For instance, in multiagent systems, it is often necessary for a group of agents to make a collective decision by means of voting in order to reach a joint goal.
Approval-based voting systems are among the most important voting systems and have been extensively studied in the literature~\cite{AAAI2015AzizBCEFW,baumeisterapproval09,Fishburm81,Kilgour2014Marshall,DBLP:conf/icaart/Lin11,DBLP:conf/aaai/SkowronF15,Yangaamas14b,AAMAS15YangSG}. In an approval-based voting, each voter has a preference over the candidates, and based on the preference, the voter determines a subset of candidates that {\she} approves. The winners are the candidates that get the most approvals. In most of the approval-based voting systems, voters are assumed to have transitive preferences. That is, if a voter prefers a candidate $a$ to another candidate $b$, and prefers $b$ to a third candidate $c$, then the voter prefers $a$ to $c$. Among the most well-studied approval-based voting systems are Approval, $r$-Approval and Plurality. In {\it{Approval voting}}, each voter has a dichotomous preference which can be represented by a partition $(C_1,C_2)$ of the candidates, meaning that the voter prefers every candidate in $C_1$ to every candidate in $C_2$, and are indifferent between candidates in each $C_i$ where $i=1,2$. Moreover, a voter with a dichotomous preference $(C_1,C_2)$ approves all candidates in $C_1$ and disapproves  all candidates in $C_2$. In {\it{$r$-Approval voting}}, each voter has a preference which is defined as a linear order over the candidates, and approves exactly the top-$r$ ordered candidates. {\it{Plurality}} is exactly $1$-Approval.

\myvspace{-5pt}
\subsection{Motivation}
There is no doubt that transitive preferences occur naturally in many real-world applications. The question is whether intransitive preferences make sense either. The answer is ``Yes!''. In fact, there exist many real-world applications where voters may have intransitive preferences (see, e.g.,~\cite{DBLP:conf/uai/Elkind014,DBLP:journals/jair/FaliszewskiHHR09,DBLP:conf/atal/FaliszewskiHS10,Monjardet1978,SaarinenTG2014AAAIworkshopIntransitivePreference}). For instance, when voters compare candidates based on, not one, but multiple quality parameters~\cite{DBLP:conf/uai/Elkind014,DBLP:conf/atal/XiaCL10}. Another natural scenario where intransitive preferences arise is that when the number of candidates is considerably large~\cite{DBLP:conf/uai/Elkind014}. In this case, it is more efficient to utilize vote elicitation techniques where voters iteratively cast parts of their preference such as pairwise comparisons. In addition, intransitive preferences may also arise in the settings of sport prediction, where an agency (e.g., a gambling company) desires to predict the sport result for some special purpose. In order to gain a result as precise as possible, the agency might resort to several experts, who based on their expertise suggest the winning player in each pending match between two players. Then, based on the suggestions, the agency applies a voting to predict the winning player(s). In this case, each suggestion by an expert may be considered as a preference which is not necessarily transitive, since it is commonplace to see that a player $a$ who beats another player $b$ is beaten by a third player $c$ who is beaten by $b$. We refer to~\cite{KirchsteigerP1996Intransitive,Tversky1969intranstivepreferences} for further discussion with several concrete examples.

Finally, we would like to point out that intransitive preferences may also occur in district-based voting or group-based voting (a group may be a political party, a department in a university, an affiliate of a company, a set of robots, etc.), where each group consists of their own group members and is only allowed to submit one single group vote. In such a situation, group leaders may need to first apply a voting to derive their group vote before the whole voting. If there are three group members whose preferences over three candidates $a,b,c$ are $a\succ b\succ c, b\succ c\succ a, c\succ a\succ b$, respectively, then the group vote would probably be $a\succ b, b\succ c$ but $c\succ a$, an intransitive preference. A significant difference between this case and the cases mentioned above is that each intransitive preference (group vote) in this case is drawn from the votes cast by the group members.   However, each previous mentioned intransitive preference is cast by a single voter.

We extend the framework of approval-based voting to the settings where voters may hold intransitive preferences over the candidates. The major difficulty of imposing the framework in these settings is that voters with intransitive preferences are not able to determine who they should approve. In order to address this issue, we utilize tournament solutions. It should be noted that an intransitive preference can be represented by a tournament (we consider only complete preferences, i.e., for every two candidates $a$ and $b$ either $a$ is preferred to $b$ or the other way around)---a complete and asymmetric binary relation. From the graph theory point of view, a tournament is a {\it{directed graph}} where there is exactly one arc between every pair of vertices (candidates). A {\it{tournament solution}} is a function that maps each tournament to a nonempty subset of candidates.
Tournament solutions as a powerful decision making model have been extensively-studied in the literature~\cite{handbookofcomsoc2015Cha3Brandt,Brandt2010,BrandtS2014ORDiscrimitiveTournament,MSYangIJCAI2015,Yang2016FurtherstepTES}. For instance, tournament solutions have significant applications in voting. In particular, given a set of votes, we can create a tournament based on the majority relations between the candidates (assume that the number of votes is odd): create an arc from $a$ to $b$ if there are more voters preferring $a$ to $b$. Then, a tournament solution is applied to the tournament to select the winners. 

Each approval-based voting system studied in this paper is a natural combination of the classic Approval voting and a well-studied tournament solution. Precisely, fixing a tournament solution, each voter in this setting submits a vote which is defined as a tournament. Then, every candidate selected by the tournament solution is approved by the voter, and every candidate not selected is disapproved. We remark that in practice, voters need only to cast their votes, but leave the duty of calculating the winning candidates with respect to the tournament solution to a second agency (e.g., a computer, the voting designer, etc.), since it is unnatural to assume or require that every voter knows how the tournament solution works. In other words, we assume that the voters implicitly approve the winning candidates in their cast tournaments, with respect to the associated tournament solution.
In this paper, we mainly consider three tournament solutions, namely, the top cycle, Copeland set and uncovered set. One reason that we choose them to study is that they are among the most prevalent tournament solutions which have been extensively studied in the literature. 
It also makes sense to consider other tournament solutions such as minimal covering set and tournament equilibrium set (see~\cite{handbookofcomsoc2015Cha3Brandt} for further tournament solutions).

As we pointed out earlier in the example on group-based voting, tournaments in our model are not necessarily to be cast directly by voters, but can be also drawn from the majority relations between the candidates according to the votes cast by some group members. In this scenario, it makes much sense to first apply a tournament solution to the group votes (tournaments) to determine the winning candidates in each subvoting.

We would like to point out that apart from the model we proposed in this paper, there are several other prominent approaches to aggregate tournaments. For instance, we could apply different tournament solutions to the given tournaments to determine the candidates implicitly approved by the voters. An explanation is that in a group-based voting, each group (leader) is allowed to freely choose a tournament solution to use. Another approach to select winners from a set of tournaments would be as follows: First, we create a tournament based on the majority relations between the candidates, i.e., there is an arc from $a$ to $b$ if there is an arc from $a$ to $b$ in a majority of the given tournaments. Then, we apply a tournament solution to the tournament to select the winners. In addition, researchers have studied the model of deriving a ranking of the candidates based on a give set of tournaments, see, e.g.,~\cite{Monjardet1978} and references therein.

\myvspace{-2pt}
\subsection{Our Contribution}
The major contribution of this work is the initialization of the study of a class of voting systems for the scenarios where voters may have intransitive preferences. To give a comprehensive understanding of the new voting systems, we study some axiomatic properties of these voting systems. Axiomatic properties of voting systems are primary factors used to evaluate voting systems and important guidance for voting organizers to select a proper voting system for their specific purpose in practice~\cite{DBLP:conf/atal/ElkindFSS14,DBLP:conf/aaai/FreemanBC14,DBLP:journals/scw/Schulze11,DBLP:conf/sigecom/Xia15}. In particular, we prove that these voting systems satisfy several axiomatic properties for many common tournament solutions. As a byproduct, we introduce two concepts of monotonicity for tournament solutions, and show that the top cycle satisfies both monotonicity criteria, while both the Copeland set and the uncovered set averse to the monotonicity criteria. Our results concerning axiomatic properties are summarized in Theorems~\ref{thm_all_consistent}-\ref{thm_TC_Approval_Pareto_CO_UC_Approval_not}.

In addition, we study the complexity of strategic behavior under these voting systems. In particular, we study control and bribery problems. We achieve polynomial-time solvability results, {\nphns} results as well as {\wbhns} results. See Table~\ref{tab:complexitysummary} for a summary of these results. A general conclusion is that these voting systems resist more types of strategic behavior than other approval-based voting systems such as Plurality and Approval. Studying complexity of strategic voting problems has been the main focus of many research papers. First, complexity is widely considered as a prominent theoretical barrier against strategic behavior in voting systems. Second, complexity helps practitioners decide what
kind of solution method is appropriate. For polynomial-time solvability results, we directly provide efficient algorithms with low time complexity. On the other hand, hardness results (e.g.,~{\nphns} results and {\wbhns} results) suggest that finding an exact solution is apt to be costly or impractical, and resorting to approximation  algorithms or heuristic algorithms may be a necessary choice.
Finally, it should be pointed out that complexity of strategic behavior for voting systems has also been considered as an important factor to evaluate voting systems, see, e.g., ~\cite{Bartholdi92howhard}.

\myvspace{-8pt}
\section{Preliminaries}\label{sec:preliminaries}

{\textbf{Tournament.}} 
In this paper, we use the terms ``candidate'' and ``vertex'' interchangeably.
A \emph{tournament} $T$ is a pair $(V(T), \trelation)$ where $V(T)$ is a set of {\it{candidates}} and $\trelation$ is an asymmetric and complete binary relation on $V(T)$.
For $X,Y\subseteq V(T)$ such that $X\cap Y=\emptyset$, $X \trelation Y$ means that $x\trelation y$ for every $x\in X$ and every $y\in Y$.
For ease of exposition, we use {\it{directed graphs}} to represent tournaments. Precisely, in this paper a tournament $T=(V(T), \trelation)$ is considered as a directed graph where $V(T)$ is considered as the {\it{vertex set}} and $\trelation$ is considered as the {\it{arc set}}. We refer to the textbook by West~\cite{Douglas2000} for readers who are not familiar with graph theory.

For a candidate $a\in V(T)$, let $N^-_{T}(a)$ denote the set of {\it{inneighbors}} of $a$ in $T$ and $N^{+}_{T}(a)$ the set of {\it{outneighbors}} of $a$, i.e., $N^-_{T}(a)=\{b\in V(T)\mid b\trelation a\}$ and $N^+_{T}(a)=\{b\in V(T)\mid a\trelation b\}$. The {\it{indegree}} (resp. {\it{outdegree}}) of $a$ is defined as $|N^-_T(a)|$ (resp. $|N^+_T(a)|$). A tournament $T$ is {\it{regular}} if for every candidate $a$ in $T$ it holds that $||N_T^+(a)|-|N_T^-(a)||\leq 1$. A {\it{directed triangle}} is a regular tournament with three vertices.

A {\it{directed path}} from a candidate $a$ to another candidate $b$ is a vertex sequence $(a=v_1,v_2,...,v_t=b)$ such that $v_i\trelation v_{i+1}$, for every $i\in \{1,2,...,t-1\}$. A tournament $T$ is {\it{strongly connected}} if there is a directed path from every candidate to every other candidate.
A {\it{maximal strongly connected component}} of $T$ is a strongly connected subtournament of $T$ with maximal vertices.

The {\it{source}} of a tournament $T=(V(T),\trelation)$ is the candidate $a$ so that $a\trelation b$ for every candidate $b\in V(T)\setminus \{a\}$.
The source is also called the {\it{Condorcet winner}} of the tournament from the social choice point of view. Clearly, not every tournament has a source. 
For a subset $B\subseteq V(T)$, $T[B]$ is the {\it{subtournament}} induced by $B$, i.e., $T[B]=(B,\trelation')$ where for every $a,b\in B$, $a\trelation' b$ if and only if $a\trelation b$.

{\textbf{Tournament Solution.}} A {\it{tournament solution}} $\ts$ is a function that maps every tournament $T$ to a nonempty subset $\ts(T)\subseteq V(T)$. In this paper, we mainly study the following three tournament solutions~\cite{handbookofcomsoc2015Cha3Brandt}. We refer to~\cite{handbookofcomsoc2015Cha3Brandt} for a comprehensive introduction to further well-studied tournament solutions.

\myvspace{-6pt}
\begin{description}\itemsep=-2pt
\item[\it{Copeland Set}.] The {\it{Copeland score}} of a candidate $c$ in a tournament $T$
is defined as the outdegree of $c$ in $T$. 
The Copeland set of $T$, denoted by $\cop{T}$, consists of all candidates with the highest Copeland score.

\item[\it{Top Cycle}.]  The top cycle $\topc{T}$ of a tournament $T$ is the unique minimal subset of candidates such that there is an arc from every candidate in the subset to every candidate not in the subset.

\item[\it{Uncovered Set}.] A {\it{king}} in a tournament is a candidate $a$ such that for every other candidate $b$, either $a\trelation b$ or there is another candidate $c$ such that $a\trelation c$ and $c\trelation b$. It is folklore that every tournament has at least one king~\cite{Landau1953tournamentking}. The uncovered set of a tournament $T$, denoted by $\uc{T}$, is the set of all kings of $T$.
\end{description}

It is known that for every tournament $T$ it holds $\cop{T},\uc{T}\subseteq \topc{T}$~(see, e.g.,~\cite{handbookofcomsoc2015Cha3Brandt}).

{\bf{Election.}} An {\it{election}} is a tuple $\mathcal{E}=(\mathcal{C},\mathcal{T})$,
where $\mathcal{C}$ is a set of candidates, and $\mathcal{T}$ is a list of {\it{votes}}
(for convenience, the terminologies ``vote'' and ``voter'' are used interchangeably throughout this paper).
In this paper, we consider only the election where each vote is defined as a tournament $T=(\mathcal{C},\trelation)$. For two candidates $a,b\in \mathcal{C}$
and a vote $T(\mathcal{C},\trelation)$, $a\trelation b$ means that the vote prefers $a$ to $b$.
A {\it{voting correspondence}} 
$\vc$ is a function that maps an election $\mathcal{E}=(\mathcal{C},\mathcal{T})$ to a
nonempty subset $\vc(\mathcal{E})$ of $\mathcal{C}$. We call the elements in $\vc(\mathcal{E})$
the {\it{winners}} of $\mathcal{E}$ with respect to $\vc$. If $\vc(\mathcal{E})$ consists of only one winner, we call
it the {\it{unique winner}}; otherwise, we call them {\it{co-winners}}. For two non-overlapping lists of tournaments
$\mathcal{T}=(T_1,T_2,...,T_x)$ and
$\mathcal{T}'=(T_1',T_2',...,T_y')$, we denote by $\mathcal{T}+\mathcal{T}'$ the list $(T_1,...,T_x,T_1',...,T_y')$. For two elections
$\mathcal{E=(C,T)}$ and $\mathcal{E'=(C,T')}$ with the same candidate set $\mathcal{C}$,
 $\eunion{\mathcal{E}}{\mathcal{E}'}=(\mathcal{C},\mathcal{T}+\mathcal{T}')$.

{\bf{Implicit Approval Voting}}.
Now we introduce the core concept in this paper---$\ts$-Approval. Each $\ts$-Approval is a combination of the prevalent Approval voting and a tournament solution $\ts$. To the best of our knowledge, such voting correspondences have not been studied in the literature.
Let $\mathcal{E}=(\mathcal{C},\mathcal{T})$ be an election.

\myvspace{-8pt}
\begin{center}
\begin{tabular}{|p{0.45\textwidth}|}\hline
{\textbf{$\ts$-Approval}} 

Each candidate $c\in \mathcal{C}$ is assigned a score defined as $\score{c}{\mathcal{E}}{\ts}=|\{T\in \mathcal{T}\mid c\in \ts(T)\}|$.
The candidates with the highest score are the winners.\\ \hline
\end{tabular}
\end{center}
\myvspace{-2pt}

In Approval, each voter explicitly determines {\herself} whom {\she} wants to approve.
In $\ts$-Approval, each voter with preference $T$ is assumed to implicitly approve
all candidates in $\ts(T)$ and disapprove all the remaining candidates.


{\bf{Properties of Voting Correspondences.}}

\myvspace{-5pt}
\begin{description}\itemsep=-2pt
\item[\it{Anonymity}.] A voting correspondence $\vc$ is anonymous if reordering the votes does not affect the winning set. That is, for every two elections $\mathcal{E}=(\mathcal{C},\mathcal{T}=(T_1,...,T_n))$ and $\mathcal{E}'=(\mathcal{C},\mathcal{T}'=(T_{\sigma(1)}, T_{\sigma(2)},...,T_{\sigma(n)}))$ where $(\sigma(1), \sigma(2),...,\sigma(n))$ is a permutation of $(1,2,...,n)$, it holds that $\vc(\mathcal{E})=\vc(\mathcal{E}')$.

\item[\it{Neutrality}.] An election $(\mathcal{C},\mathcal{T}=(T_1,...,T_n))$ is {\it{isomorphic}} to
another election $(\mathcal{C}',\mathcal{T}'=(T_1',...,T_n'))$ where $T_i=(\mathcal{C},\trelation_i)$ and $T_i'=(\mathcal{C}',\trelation_i')$ for every $i\in \{1,...,n\}$,
 if there is an one-to-one mapping
$f: \mathcal{C}\mapsto \mathcal{C}'$ such that
for every two distinct candidates $a,b\in \mathcal{C}$ and every $i\in \{1,2,....,n\}$, it holds that $a\trelation_i b$ if and only if $f(a)\trelation_{i}' f(b)$.
A voting correspondence $\vc$ is neutral if for every two isomorphic elections
$\mathcal{E}=(\mathcal{C},\mathcal{T})$ and $\mathcal{E}'=(\mathcal{C}',\mathcal{T}')$, and every $c\in \mathcal{C}$,
it holds that $c\in \vc(\mathcal{E})$ if and only if $f(c)\in \vc(\mathcal{E}')$, where $f$ is the mapping as discussed above for
$\mathcal{E}$ and $\mathcal{E}'$.

\item[\it{Monotonicity}.] A voting correspondence $\vc$ is monotonic if for every two elections
$\mathcal{E}=(\mathcal{C},\mathcal{T}=(T_1,...,T_n))$, $\mathcal{E}'=(\mathcal{C},\mathcal{T}'=(T_1',...,T_n'))$,
and every $c\in \vc(\mathcal{E})$ such that for every $i\in \{1,2,...,n\}$
    (1) $T_i[\mathcal{C}\setminus \{c\}]=T_i'[\mathcal{C}\setminus \{c\}]$; and
    (2) $N^+_{T_i}(c)\subseteq N^+_{T_i'}(c)$,
    it holds that $c\in \vc(\mathcal{E}')$.


\item[\it{Majority}.] A voting correspondence $\vc$ satisfies the majority criterion if for every
election $\mathcal{E}=(\mathcal{C},\mathcal{T})$ where there is a candidate $c\in \mathcal{C}$ which is the source
in a majority of the tournaments in $\mathcal{T}$, it holds that $c\in \vc(\mathcal{E})$.

\item[\it{Consistency}.] A voting correspondence $\vc$ is consistent if for every two elections
$\mathcal{E=(C,T)}$ and $\mathcal{E'=(C,T')}$, it holds that
$\vc(\mathcal{E})\cap \vc(\mathcal{E}')\subseteq \vc(\eunion{\mathcal{E}}{\mathcal{E}'})$.

\item[\it{Pareto optimal}.] A voting correspondence ${\vc}$ is Pareto optimal if for every election $\mathcal{E=(C,T)}$ and every two candidates $a,b\in \mathcal{C}$ such that $a\succ b$ in every tournament $T=(\mathcal{C},\succ)\in \mathcal{T}$, $a\not\in \vc(\mathcal{E})$ implies $b\not\in \vc(\mathcal{E})$.
\end{description}

{\bf{Properties of Tournament Solutions.}}
To the names of the following properties of tournament solutions, we append a prefix ``TS'', standing for ``tournament solution'', to avoid confusion with the definitions of the axiomatic properties of voting correspondences.

\myvspace{-5pt}
\begin{description}\itemsep=-2pt
\item[\it{\tourproperty{Neutrality}}.] Two tournaments $T=(\mathcal{C},\trelation)$
and $T'=(\mathcal{C}',\trelation')$ where $|\mathcal{C}|=|\mathcal{C}'|$ are {\it{isomorphic}} if there is an one-to-one
mapping $f: \mathcal{C}\mapsto \mathcal{C}'$ such that for every two $a,b\in \mathcal{C}$,
it holds that $a\trelation b$ if and only if $f(a)\trelation' f(b)$. Here, $f$ is called an {\it{isomorphic mapping}} of $T$ and $T'$.
A tournament solution $\ts$ satisfies the {\tourproperty{neutrality}} criterion if for every two isomorphic tournaments
$T=(\mathcal{C},\trelation)$ and $T'=(\mathcal{C}',\trelation')$, it holds that
$\ts(T')=\{f(a)\in \mathcal{C}'\mid a\in \ts(T)\}$, where $f$ is an isomorphic mapping of $T$ and $T'$.

\item[\it{\tourproperty{Monotonicity}}.] A tournament solution $\ts$ is {\tourproperty{monotonic}} if for every two tournaments $T=(\mathcal{C},\succ)$, $T'=(\mathcal{C},\succ')$,
and every candidate $c\in \ts(T)$ such that
$T[\mathcal{C}\setminus \{c\}]=T'[\mathcal{C}\setminus \{c\}]$ and $N^+_T(c)\subseteq N^+_{T'}(c)$,
it holds that $c\in \ts(T')$.

\item[\it{\tourproperty{Condorcet consistency}}.] A tournament solution ${\ts}$
is {\tourproperty{Condorcet consistent}} if for every
tournament $T$ which admits the Condorcet winner $w$, $\ts(T)=\{w\}$.
\end{description}

Now we introduce two concepts of monotonicity of tournament solutions. To the best of our knowledge, they have not been studied in the literature. Generally speaking, a tournament solution is {\it{{\tourproperty{exclusive monotonic}}}} if a winning candidate $c$ remains as a winning candidate when
$c$ is preferred to more candidates, without changing the preferences between other candidates. Moreover,
no nonwinning candidate benefits from this, i.e., no nonwinning candidate becomes a winning candidate. So, if a tournament solution is {\tourproperty{exclusive monotonic}, then making a winning candidate stronger never makes a nonwinning candidate better off. The formal definition is as follows.

\begin{description}
\item[\it{\tourproperty{Exclusive monotonicity}}.] A tournament solution $\ts$ is
{\tourproperty{exclusive monotonic}} if for every two tournaments
$T=(\mathcal{C},\succ)$, $T'=(\mathcal{C},\succ')$, and every  $c\in \ts(T)$ such that
    (1) $T[\mathcal{C}\setminus \{c\}]=T'[\mathcal{C}\setminus \{c\}]$; and
    (2) $N^+_T(c)\subseteq N^+_{T'}(c)$,
    it holds that $c\in \ts(T')$ and $\ts(T')\subseteq \ts(T)$.
\end{description}

A tournament solution is {\it{{\tourproperty{exclusive negative monotonic}}}} if when
a nonwinning candidate $c$ is preferred to more candidates, and some other nonwinning candidate becomes a winning candidate, then $c$ must become a winning candidate as well. 
In other words, if extending the outneighborhood of $c$ benefits some nonwinning candidates, then $c$ must benefit from this operation. The formal definition is as follows.

\begin{description}
\item[\it{\tourproperty{Exclusive negative monotonicity}}] (\tourproperty{ENM}).
A tournament solution $\ts$
satisfies the {\tourproperty{ENM}} criterion,
if for every two tournaments $T=(\mathcal{C},\succ)$ and $T'=(\mathcal{C},\succ')$,
and every candidate $c\not\in \ts(T)$ such that
    (1) $T[\mathcal{C}\setminus \{c\}]=T'[\mathcal{C}\setminus \{c\}]$; and
    (2) $N^+_T(c)\subseteq N^+_{T'}(c)$,
    it holds that $\ts(T')\not\subseteq \ts(T)$ implies $c\in \ts(T')$.
\end{description}

{\bf{Parameterized Complexity}}.
A {\it{parameterized problem}} is a language $L \subseteq \Sigma^*\times\mathbb{N}$, where $\Sigma$ is a finite alphabet.
The first \mbox{component} $\mathmainpartdefine\in \Sigma^*$ is called the {\it{main part}}, and the second component $\mathparedefine\in \mathbb{N}$ is called the {\it{parameter}}. Downey and Fellows \cite{fellows99} established the parameterized complexity theory and developed the following parameterized complexity hierarchy:
\[\text{\fpt} \subseteq \text{\wa}\subseteq \text{\wb}\subseteq...\subseteq \text{\xp}.\]
In particular, {\fpt} (stands for fixed-parameter tractable) includes all parameterized problems which admit $O(f(\mathparedefine)\cdot|\mathmainpartdefine|^{O(1)})$-time algorithms. Here $f(\mathparedefine)$ is a computable function of $k$ and $|I|$ is the size of the main part.
Given two parameterized problems $Q$ and $Q'$, an {{{\fpt}-{\it{reduction}}}} from $Q$ to $Q'$ is an
algorithm that takes as input an instance $(\mathmainpartdefine,\mathparedefine)$ of $Q$ and outputs an instance $(\mathmainpartdefine',\mathparedefine')$ of $Q'$ such that 

(1) the algorithm runs in $f(\mathparedefine)\cdot |\mathmainpartdefine|^{O(1)}$ time, where $f$ is a computable function in $\mathparedefine$;

(2) $(\mathmainpartdefine,\mathparedefine)\in Q$ if and only if $(\mathmainpartdefine',\mathparedefine')\in Q'$; and

(3) $\mathparedefine'\leq g(\mathparedefine)$, where $g$ is a computable function in $\mathparedefine$.

A problem is {\wih} for a positive integer $i$ if all problems in {\wi} can be {\fpt}-reducible to the problem.
{\wih} problems are unlikely to admit {\fpt}-algorithms, unless the parameterized complexity hierarchy
collapses at some level~\cite{fellows99}.

\myvspace{-5pt}
\section{Axiomatic Properties}
In this section, we study axiomatic properties for $\ts$-Approval for different tournament solutions $\ts$.
It is fairly easy to check that $\ts$-Approval is anonymous for all tournament solutions $\ts$.
Moreover, $\ts$-Approval is neutral for all tournament solutions
$\ts$ which satisfy the {\tourproperty{neutrality}} criterion.
Furthermore, $\ts$-Approval satisfies the majority criteria for all $\ts$ that are {\tourproperty{Condorcet consistent}}. We now study some other properties for $\ts$-Approval. Consider first the consistency criterion.

\begin{theorem}
\label{thm_all_consistent}
$\ts$-Approval is consistent for all tournament solutions $\ts$.
\end{theorem}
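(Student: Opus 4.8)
The plan is to exploit the additivity of the $\ts$-Approval score under concatenation of vote lists. The entire argument reduces to a single observation: for any candidate $c\in\mathcal{C}$ and any two elections $\mathcal{E}=(\mathcal{C},\mathcal{T})$ and $\mathcal{E}'=(\mathcal{C},\mathcal{T}')$ on the same candidate set, the score in the combined election splits as a sum,
\[
\score{c}{\eunion{\mathcal{E}}{\mathcal{E}'}}{\ts}=\score{c}{\mathcal{E}}{\ts}+\score{c}{\mathcal{E}'}{\ts}.
\]
First I would justify this identity. By definition $\score{c}{\mathcal{E}}{\ts}=|\{T\in\mathcal{T}\mid c\in\ts(T)\}|$, and since $\eunion{\mathcal{E}}{\mathcal{E}'}=(\mathcal{C},\mathcal{T}+\mathcal{T}')$ with $\mathcal{T}$ and $\mathcal{T}'$ non-overlapping lists, the set of votes in $\mathcal{T}+\mathcal{T}'$ that select $c$ is exactly the disjoint union of the corresponding sets for $\mathcal{T}$ and $\mathcal{T}'$; counting gives the identity.

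Next I would use this to establish the consistency inclusion. Take any $c\in\vc(\mathcal{E})\cap\vc(\mathcal{E}')$, and let $M$ and $M'$ be the maximum scores attained in $\mathcal{E}$ and $\mathcal{E}'$, respectively. Since $c$ is a winner of both, $\score{c}{\mathcal{E}}{\ts}=M$ and $\score{c}{\mathcal{E}'}{\ts}=M'$, so by the additivity identity $c$ scores exactly $M+M'$ in $\eunion{\mathcal{E}}{\mathcal{E}'}$. For any other candidate $a\in\mathcal{C}$ we have $\score{a}{\mathcal{E}}{\ts}\le M$ and $\score{a}{\mathcal{E}'}{\ts}\le M'$, whence $\score{a}{\eunion{\mathcal{E}}{\mathcal{E}'}}{\ts}\le M+M'$. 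Therefore $c$ attains the highest score in the combined election and belongs to $\vc(\eunion{\mathcal{E}}{\mathcal{E}'})$, which is precisely what consistency demands.

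I expect no genuine obstacle here: the result follows the classic template for additive (scoring) rules, and the only point to be careful about is that the score is defined purely by counting the votes $T$ with $c\in\ts(T)$, so it depends on the tournament solution $\ts$ only through the per-vote membership $c\in\ts(T)$, never through any interaction between distinct votes. This is exactly why the conclusion holds uniformly for every tournament solution $\ts$, with no assumption on $\ts$ whatsoever.
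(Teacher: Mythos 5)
Your proof is correct and takes essentially the same route as the paper: both arguments rest on the additivity identity $\score{c}{\eunion{\mathcal{E}}{\mathcal{E}'}}{\ts}=\score{c}{\mathcal{E}}{\ts}+\score{c}{\mathcal{E}'}{\ts}$ and then observe that a candidate with maximum score in both elections retains the maximum score in the combined election. The only cosmetic difference is that the paper additionally remarks that the combined winning set equals $\vc(\mathcal{E})\cap\vc(\mathcal{E}')$ when that intersection is nonempty, a stronger statement than the inclusion that consistency requires and that you prove.
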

\begin{proof}
Let $\mathcal{E}_1=(\mathcal{C},\mathcal{T}_1)$ and $\mathcal{E}_2=(\mathcal{C},\mathcal{T}_2)$
be two elections such that $\ts(\mathcal{E}_1)\cap \ts(\mathcal{E}_1)\neq \emptyset$.
For a candidate $c\in \mathcal{C}$, it holds that
\[\score{c}{\mathcal{E}_1+\mathcal{E}_2}{\ts}=\score{c}{\mathcal{E}_1}{\ts}+\score{c}{\mathcal{E}_2}{\ts}.\]
This directly implies that if a candidate $c\in \mathcal{C}$ has the highest score in both $\mathcal{E}_1$ and $\mathcal{E}_2$, then $c$ has the highest score in the combined election $\mathcal{E}_1+\mathcal{E}_2$. It then follows that $\ts(\eunion{\mathcal{E}_1}{\mathcal{E}_2})=\ts(\mathcal{E}_1)\cap \ts(\mathcal{E}_2)$. 
\end{proof}

Now we study the monotonicity of $\ts$-Approval for all {\tourproperty{Condorcet consistent}} tournament solutions $\ts$. It should be noted that almost all commonly used tournament solutions, including all tournament solutions studied in this paper, are {\tourproperty{Condorcet consistent}}. We derive both sufficient and necessary conditions for such $\ts$-Approval to be monotonic.

\myvspace{-4pt}
\begin{theorem}
\label{thm:concorcetconsistentmonotonicity}
Let $\ts$ be a {\tourproperty{Condorcet consistent}} tournament solution.
Then, $\ts$-Approval is monotonic if and only if $\ts$ satisfies the
{\tourproperty{exclusive monotonicity}} and {\tourproperty{ENM}} criteria.
%
\end{theorem}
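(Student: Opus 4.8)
The plan is to prove the two implications separately, in both cases leaning on the additivity of scores observed in the proof of Theorem~\ref{thm_all_consistent}, namely $\score{c}{\eunion{\mathcal{E}_1}{\mathcal{E}_2}}{\ts}=\score{c}{\mathcal{E}_1}{\ts}+\score{c}{\mathcal{E}_2}{\ts}$. For the \emph{if} direction, assume $\ts$ is \tourproperty{exclusive monotonic} and satisfies the \tourproperty{ENM} criterion, and consider a monotonicity operation turning $\mathcal{E}=(\mathcal{C},(T_1,\dots,T_n))$ into $\mathcal{E}'=(\mathcal{C},(T_1',\dots,T_n'))$ that strengthens a winner $c$. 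I would track vote by vote how the approval of $c$ and of any other candidate $d$ changes, classifying each vote $i$ into three types: (A) $c\in\ts(T_i)$, where \tourproperty{exclusive monotonicity} gives $c\in\ts(T_i')$ and $\ts(T_i')\subseteq\ts(T_i)$; (B) $c\notin\ts(T_i)$ but $c\in\ts(T_i')$; and (C) $c\notin\ts(T_i)$ and $c\notin\ts(T_i')$, where the contrapositive of \tourproperty{ENM} yields $\ts(T_i')\subseteq\ts(T_i)$. In each type the change in $d$'s approval is at most the change in $c$'s approval: in types A and C we have $\ts(T_i')\subseteq\ts(T_i)$, so $c$'s approval is unchanged while $d$'s can only decrease; in type B the approval of $c$ rises by one, an amount $d$ cannot exceed. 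Summing over all votes and using $\score{d}{\mathcal{E}}{\ts}\le\score{c}{\mathcal{E}}{\ts}$ (since $c$ is a winner of $\mathcal{E}$), I conclude $\score{d}{\mathcal{E}'}{\ts}\le\score{c}{\mathcal{E}'}{\ts}$ for every $d\ne c$, so $c$ remains a winner and $\ts$-Approval is monotonic.

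For the \emph{only if} direction I argue by contraposition, exploiting two facts. First, a single-vote election $(\mathcal{C},(T))$ has winner set exactly $\ts(T)$, since every approved candidate scores $1$ and the rest score $0$. Second, because $\ts$ is \tourproperty{Condorcet consistent}, for any $x\in\mathcal{C}$ a \emph{selector vote} $B_x$, defined as a tournament on $\mathcal{C}$ in which $x$ is the source, satisfies $\ts(B_x)=\{x\}$ and thus approves exactly $x$. If \tourproperty{exclusive monotonicity} fails, then either $c\notin\ts(T')$, which the single-vote election $(\mathcal{C},(T))$ immediately turns into a violation of monotonicity, or $c\in\ts(T')$ but some $d\in\ts(T')\setminus\ts(T)$; in the latter case the election $(\mathcal{C},(T,B_d))$ makes both $c$ and $d$ co-winners at score $1$, while strengthening $c$ in the copy of $T$ raises $d$ to score $2$ and leaves $c$ at $1$, contradicting monotonicity. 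If \tourproperty{ENM} fails, there are $T,T',c,d$ with $c\notin\ts(T)\cup\ts(T')$ and $d\in\ts(T')\setminus\ts(T)$; here the election $(\mathcal{C},(T,B_c,B_d))$ makes $c$ a co-winner at score $1$, but strengthening $c$ only in the copy of $T$ drives $d$ to score $2$ while $c$ stays at $1$, again violating monotonicity.

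The main obstacle is the \emph{only if} direction: one must engineer elections in which the candidate being strengthened is a co-winner yet can be overtaken, which requires fine control over individual candidates' scores. The key enabling step is recognizing that \tourproperty{Condorcet consistency} provides selector votes approving exactly one chosen candidate, so that $c$ can be tied at the top precisely with the candidate $d$ who benefits from the operation, after which the strengthening breaks the tie against $c$. A secondary point to check is that leaving the selector votes unchanged is a legitimate instance of the monotonicity operation, since the conditions $T_i[\mathcal{C}\setminus\{c\}]=T_i'[\mathcal{C}\setminus\{c\}]$ and $N^+_{T_i}(c)\subseteq N^+_{T_i'}(c)$ hold trivially when $T_i'=T_i$; hence the whole transformation is a valid strengthening of $c$ and the contradictions are genuine.
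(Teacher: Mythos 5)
Your proposal is correct and follows essentially the same route as the paper: the forward direction is the paper's counting argument (your per-vote type analysis is exactly the paper's claim that $\mathcal{W}_{c'}\subseteq\mathcal{W}_c$, derived from {\tourproperty{exclusive monotonicity}} and the contrapositive of {\tourproperty{ENM}}), and the converse uses the same device of source (``selector'') votes, justified by {\tourproperty{Condorcet consistency}}, to tie the strengthened candidate with the beneficiary and then break the tie. The only difference is cosmetic: your counterexample elections are slightly leaner (e.g., two votes instead of the paper's four in the exclusive-monotonicity case), but the construction idea is identical.
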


{\begin{proof}
Let $\ts$ be a {\tourproperty{Condorcet consistent}} tournament solution as stated in the theorem and $\varphi=\ts$-{\text{Approval}}.
Assume that $\ts$ satisfies the {\tourproperty{exclusive monotonicity}}
and the {\tourproperty{ENM}} criteria.
Let  $\mathcal{E}=(\mathcal{C},\mathcal{T}=(T_1,...,T_n))$ and
$\mathcal{E}'=(\mathcal{C},\mathcal{T}'=(T_1',...,T_n'))$ be two
elections with the same candidate set $\mathcal{C}$.
Moreover, let $c\in \vc(\mathcal{E})$ be a candidate such that
(1) $T_i[\mathcal{C}\setminus \{c\}]=T_i'[\mathcal{C}\setminus \{c\}]$; and
(2) $N^+_{T_i}(c)\subseteq N^+_{T_i'}(c)$ for every $i\in \{1,2,...,n\}$.
We shall show that $c\in \vc(\mathcal{E}')$. Let's first study the scores of
the candidates in $\mathcal{E}'$.
Apparently, $\score{c}{\mathcal{E}}{\ts}\geq \score{c'}{\mathcal{E}}{\ts}$ for every $c'\in \mathcal{C}\setminus \{c\}$.
Since $\ts$ is {\tourproperty{exclusive monotonic}}, if $c\in \ts(T_i)$
for some $T_i\in \mathcal{T}$, then $c\in \ts(T_i')$. 
For each $a\in \mathcal{C}$, let $\mathcal{W}_a=\{i\in \{1,2,...,n\}\mid a\not\in \ts(T_i), a\in \ts(T_i')\}$.
The following claim is useful.

Claim. $\mathcal{W}_{c'}\subseteq \mathcal{W}_c$ for every $c'\in \mathcal{C}\setminus \{c\}$.

Let $c'$ be a candidate in $\mathcal{C}\setminus \{c\}$, and $T_i$ and $T_i'$ be two tournaments in $\mathcal{T}$ and $\mathcal{T}'$, respectively, such that $c'\not\in \ts(T_i)$ and $c'\in\ts(T_i')$.
Clearly, $\ts(T')\not\subseteq \ts(T)$.
Since $\ts$ is {\tourproperty{exclusive monotonic}}, it must be that $c\not\in \ts(T_i)$; since otherwise,
$\ts(T')\subseteq \ts(T)$, a contradiction. Then, since $\ts$ satisfies
the {\tourproperty{ENM}} criterion, we know that $c\in \ts(T')$. The claim follows.

Due to the above claim and discussions, for every $c'\in \mathcal{C}\setminus\{c\}$, 
\myvspace{-10pt}

\begin{equation*}
\begin{split}
\score{c}{\mathcal{E}'}{\ts}& =\score{c}{\mathcal{E}}{\ts}+|\mathcal{W}_c| \\
& \geq \score{c'}{\mathcal{E}}{\ts}+|\mathcal{W}_c| \\
& \geq \score{c'}{\mathcal{E}}{\ts}+|\mathcal{W}_{c'}|\\
& \geq \score{c'}{\mathcal{E}'}{\ts}.\\
\end{split}
\end{equation*}
\myvspace{-10pt}

\noindent Thus, $c\in \vc(\mathcal{E}')$. 

It remains to prove the other direction. Assume that $\ts$ is not {\tourproperty{exclusive monotonic}}. Then, there exist two tournaments $T$ and $T'$ over the same candidate set $\mathcal{C}$ and a $c\in \ts(T)$ such that
(1) $T[\mathcal{C}\setminus \{c\}]=T'[\mathcal{C}\setminus \{c\}]$;
(2)  $N^+_T(c)\subseteq N^+_{T'}(c)$; and
(3) $c\not\in \ts(T')$, or $c\in \ts(T')$ but $\ts(T')\not\subseteq \ts(T)$.
If $c\not\in \ts(T')$ in Condition~(3), then, the election consisting of only one
vote defined as $T$ obviously shows that $\ts$-Approval is not monotonic. Otherwise,
we construct an election as follows. Let $b$ be any arbitrary candidate in $\ts(T')\setminus \ts(T)$.
The election consists of the following votes: 1 vote defined as $T$; 2 votes each defined as a tournament where
$b$ is the source; 1 vote defined as a tournament where $c$ is the source. Apparently, both $b$ and $c$ are winners in
the election, with each having two approvals. However, by replacing the vote defined as $T$ by $T'$, $b$ gets one more approval from $T'$,
implying $c$ is no longer a winner. Therefore, in this case $\ts$-Approval is not monotonic.

Assume that $\ts$ does not satisfy the {\tourproperty{ENM}}
criterion. Then, there exist two tournaments $T$ and $T'$ and a $c\not\in \ts(T)$ such that
(1) $T[\mathcal{C}\setminus \{c\}]=T'[\mathcal{C}\setminus \{c\}]$;
(2)  $N^+_T(c)\subseteq N^+_{T'}(c)$; and
(3) $\ts(T')\not\subseteq \ts(T)$ and $c\not\in\ts(T')$.
We construct an election as follows. Let $b$ be any arbitrary candidate in $\ts(T')\setminus \ts(T)$.
The election consists of the following votes: 1 vote defined as $T$; 1 vote defined as a tournament where
$b$ is the source; 1 vote defined as a tournament where $c$ is the source. It is clear that both $b$ and $c$ are winners.
However, by replacing the vote defined as $T$ by $T'$, $b$ gets one more approval from $T'$,
implying $c$ is no longer a winner. Therefore, in this case $\ts$-Approval is not monotonic.
\end{proof}
}

Due to Theorem~\ref{thm:concorcetconsistentmonotonicity}, to check whether $\ts$-Approval is monotonic for each $\ts\in \{\text{TC,UC,CO}\}$, we need only to investigate if $\ts$ satisfies the {\tourproperty{exclusive monotonicity}} and {\tourproperty{ENM}} criteria. Though that the {\tourproperty{monotonicity}} of $\ts$ for each $\ts\in \{\text{CO,UC,TC}\}$ is apparent and has been studied in the literature~\cite{handbookofcomsoc2015Cha3Brandt}, whether $\ts$ satisfies the two variants of the {\tourproperty{monotonicity}} criterion is not equally easy to see. In fact, we prove that among the three tournament solutions, only the top cycle satisfies the both the {\tourproperty{exclusive monotonicity}} and the {\tourproperty{ENM}} criteria. It then follows from this fact and Theorem~\ref{thm:concorcetconsistentmonotonicity} that TC-Approval is monotonic, but CO-Approval and UC-Approval are not. Our results concerning the above discussion are summarized in Lemma~\ref{lem:tocexclusivemonotonicity} and Theorem~\ref{thm:tcmonotonicucconot} shown below.

\begin{lemma}
\label{lem:tocexclusivemonotonicity}
{\Topcname} satisfies the {\tourproperty{exclusive monotonicity}}
and {\tourproperty{ENM}} criteria.
\end{lemma}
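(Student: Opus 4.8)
The plan is to work throughout with the standard structural description of the top cycle as the top strongly connected component: writing the condensation of $T$ as a chain of maximal strongly connected components $S_1\trelation S_2\trelation\cdots\trelation S_k$, we have $\topc{T}=S_1$, and I will freely use the three equivalent characterizations that $a\in\topc{T}$ iff $a$ reaches every other candidate by a directed path, and that $\topc{T}$ is the unique minimal dominant set (the minimal $B$ with $B\trelation \mathcal{C}\setminus B$). Let $D$ denote the set of candidates whose arc with $c$ is reversed in passing from $T$ to $T'$; by conditions (1) and (2) these are precisely the members of $N^-_T(c)$ that become out-neighbors, so $T'$ is obtained from $T$ by deleting the arcs $\{v\trelation c : v\in D\}$, adding the arcs $\{c\trelation v : v\in D\}$, and leaving every other arc untouched.

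For \tourproperty{exclusive monotonicity}, suppose $c\in\topc{T}=S_1$. First I would note that $D\subseteq S_1\setminus\{c\}$, since $S_1$ dominates all lower components and hence every in-neighbor of $c$ already lies in $S_1$; thus the reversals affect only arcs internal to $S_1$. Consequently every arc between $S_1$ and $\mathcal{C}\setminus S_1$ is unchanged, so $S_1$ remains a dominant set in $T'$, and minimality of the top cycle gives $\topc{T'}\subseteq S_1=\topc{T}$. To get $c\in\topc{T'}$ I would use reachability: $c$ reaches every candidate in $T$, and for any target a shortest such path is simple, hence visits $c$ only at its start, so it uses no deleted arc (all deleted arcs end at $c$) while its first arc leaves $c$ (and out-arcs are only ever added); thus the path survives in $T'$, proving $c$ reaches everything in $T'$.

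For \tourproperty{ENM}, suppose $c\notin\topc{T}=S_1$, say $c\in S_j$ with $j\ge 2$, and $\topc{T'}\not\subseteq\topc{T}$. The first step is the contrapositive observation that if $D\cap S_1=\emptyset$, then all arcs between $S_1$ and $\mathcal{C}\setminus S_1$ are unchanged, $S_1$ stays dominant in $T'$, and $\topc{T'}\subseteq S_1$ would follow, contradicting the hypothesis; hence some $s\in D\cap S_1$ exists, giving a new arc $c\trelation' s$ with $s\in S_1$. I then show $c$ reaches every candidate in $T'$, splitting targets by their position in the condensation of $T$: for targets in $S_j\cup\cdots\cup S_k$ (those already reachable from $c$ in $T$) the simple-path argument above transfers a path to $T'$; for targets in $S_1\cup\cdots\cup S_{j-1}$ I route through the new arc $c\trelation' s$ and then follow a directed path from $s$ confined to $S_1\cup\cdots\cup S_{j-1}$, which exists because this induced subtournament has condensation $S_1\trelation\cdots\trelation S_{j-1}$ with $s\in S_1$, and which avoids $c$ entirely so that all its arcs are unchanged. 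Thus $c$ reaches all candidates in $T'$ and $c\in\topc{T'}$.

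The main obstacle I anticipate is the reachability bookkeeping, because passing to $T'$ deletes the in-arcs $v\trelation c$ for $v\in D$, and deleting arcs can in principle destroy paths. The uniform device that resolves this in both parts is to use only simple (shortest) paths emanating from $c$ together with ``upward'' paths confined to $S_1\cup\cdots\cup S_{j-1}$: the former never revisit $c$ and so never traverse a deleted into-$c$ arc, while the latter never meet $c$ at all; in each case only arcs untouched by the modification are used, apart from a single first arc out of $c$ whose presence is guaranteed by $N^+_T(c)\subseteq N^+_{T'}(c)$.
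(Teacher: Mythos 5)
Your proof is correct, and it shares the paper's structural backbone---the condensation chain of maximal strongly connected components and the fact that the top cycle is contained in every dominant set---but it finishes both halves by a genuinely different device. For \tourproperty{exclusive monotonicity}, the paper obtains $c\in \topc{T'}$ by citing the known \tourproperty{monotonicity} of the top cycle from the literature and then, exactly as you do, notes that $\topc{T}\trelation' \mathcal{C}\setminus \topc{T}$ still holds to conclude $\topc{T'}\subseteq \topc{T}$; you instead prove the membership from scratch with the simple-path transfer argument. For \tourproperty{ENM}, both arguments pivot on whether some arc from the old top component into $c$ gets reversed (your $D\cap S_1=\emptyset$ dichotomy is the paper's ``$CC_{\rho(1)}\trelation'\{c\}$'' case), but then the paper runs an explicit case analysis---on whether $c$ is the source of $T'[CC_{\rho(i)}]$, and on the least prefix of components that $c$ dominates---to identify precisely which union of components forms $\topc{T'}$, whereas you only certify $c\in \topc{T'}$ via the reachability characterization of the top cycle, routing through the single new arc $c\trelation' s$ into $S_1$ and then along paths that avoid $c$ and are therefore untouched. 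Your route is leaner and self-contained (no appeal to \tourproperty{monotonicity} as a black box, no case distinction), and the path-surgery bookkeeping you flag as the main obstacle is handled soundly, since every deleted arc points into $c$ and simple paths leaving $c$ never use such an arc. What the paper's heavier analysis buys is an explicit description of the new top cycle rather than mere membership of $c$; this sharper output is reused elsewhere in the paper---the claim inside the proof of Theorem~\ref{thm:DBRA} points back to this case analysis to argue that a single arc reversal suffices to absorb a candidate into the top cycle. Both are valid proofs of the lemma.
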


\begin{proof}
We first show that the {\topcname} is {\tourproperty{exclusive monotonic}}.
Let $T=(\mathcal{C},\succ)$ and $T'=(\mathcal{C},\succ')$ be two tournaments, and $c\in TC(T)$ be a candidate such that
(1) $T[\mathcal{C}\setminus \{c\}]=T'[\mathcal{C}\setminus \{c\}]$ and
(2) $N^+_T(c)\subseteq N^+_{T'}(c)$.
Since the {\topcname} is {\tourproperty{monotonic}}~\cite{handbookofcomsoc2015Cha3Brandt},
it holds that $c\in \topc{T'}$.
It remains to show that $\topc{T'}\subseteq \topc{T}$. 
Due to the definition of the {\topcname}, it holds that $\topc{T}\trelation \mathcal{C}\setminus \topc{T}$. Then, it holds that $\topc{T}\trelation' \mathcal{C}\setminus \topc{T}$. It directly follows that $\topc{T'}\subseteq \topc{T}$.

Now we prove that the {\topcname} satisfies the {\tourproperty{ENM}} criterion.
Let $T=(\mathcal{C},\trelation)$ and $T'=(\mathcal{C},\trelation')$ be two tournaments, and $c\in\mathcal{C}$ be a candidate such that $c\not\in TC(T)$, $T[\mathcal{C}\setminus \{c\}]=T'[\mathcal{C}\setminus \{c\}]$, and $N^+_T(c)\subseteq N^+_{T'}(c)$.
Let $CC_1,CC_2,...,CC_t$ be the maximal
strongly connected components of $T$.
It is known that for every two distinct $CC_i$ and $CC_j$ where
$\{i,j\}\subseteq \{1,2,...,t\}$, it holds that either
$V(CC_i)\trelation V(CC_j)$ or $V(CC_j)\trelation V(CC_i)$,
where $V(CC_i)$ denotes the vertices of $CC_i$. 
Moreover, there is a unique ordering $(CC_{\rho(1)},CC_{\rho(2)},...,CC_{\rho(t)})$ where
$\{\rho(1),\rho(2),...,\rho(t)\}=\{1,2,...,t\}$ such that $CC_{\rho(i)}\trelation CC_{\rho(j)}$
for every $1\leq i< j\leq t$~\cite{Moon2013}. 
Furthermore, $\topc{T}=CC_{\rho(1)}$.
Without loss of generality, assume that $c\in CC_{\rho(i)}$ for some $1< i\leq t$.
We shall show that either $\topc{T'}=\topc{T}$, or $\topc{T'}\not\subseteq \topc{T}$ and $c\in \topc{T'}$.
Due to the above discussion, if $CC_{\rho(1)}\trelation' \{c\}$ in $T'$, then $\topc{T'}=\topc{T}=CC_{\rho(1)}$. Assume now that there exists some candidate $b$ in $CC_{\rho(1)}$ such that $c\trelation' b$. We distinguish between the following cases to proceed the proof.

Case 1. $CC_{\rho(i)}=\{c\}$, or $|CC_{\rho(i)}|>1$ and $c$ is the source of $T'[CC_{\rho(i)}]$.

Let $j$ be the minimum integer such that $i>j\geq 0$ and $\{c\}\trelation' \bigcup_{i> j'>j}CC_{\rho(j')}$.
If $j=0$, then $\{c\}$ is the top cycle of $T'$. Otherwise, $(\bigcup_{1\leq j'\leq j}CC_{\rho(j')})\cup\{c\}$ is the top cycle of $T'$.

Case 2. $|CC_{\rho(i)}|>1$ and $c$ is not the source of $T'[CC_{\rho(i)}]$.

In this case, $\bigcup_{1\leq j\leq i}CC_{\rho(j)}$ is the top cycle of $T'$.

In summary, we can conclude that either $\topc{T'}=\topc{T}$, or $\topc{T'}\not\subseteq \topc{T}$ and $c\in \topc{T'}$. Thus, the {\topcname} satisfies the {\tourproperty{ENM}} criterion.
\end{proof}


Both the {\copname} and the {\ucname} do not satisfy {\tourproperty{ENM}}.
Counter-examples can be found in {\myfig{fig:counterexamplecopucENM}}.
Due to Theorem~\ref{thm:concorcetconsistentmonotonicity}, both CO-Approval and UC-Approval are not monotonic.
Due to Theorem~\ref{thm:concorcetconsistentmonotonicity}, Lemma~\ref{lem:tocexclusivemonotonicity}, and the above discussion, we have the following theorem.

\begin{figure}
\begin{center}
\includegraphics[width=0.4\textwidth]{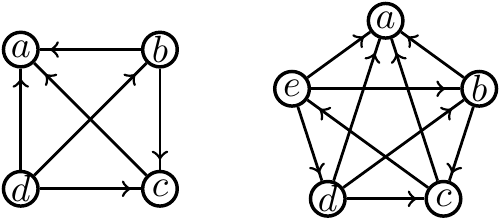}
\end{center}
\myvspace{-18pt}
\caption{The left tournament illustrates that the Copeland set does not
satisfy the {\tourproperty{ENM}} criterion. In this tournament $\{d\}$ is the Copeland set.
However, reversing the arc between $a$ and $d$ makes $\{b,d\}$ the Copeland set.
The right tournament illustrates that the uncovered set does not
satisfy the {\tourproperty{ENM}} criterion. In this tournament $\{e,c,d\}$ is the uncovered set.
However, reversing the arc between $a$ and $d$ makes $\{b,c,d,e\}$ the uncovered set.}
\label{fig:counterexamplecopucENM}
\myvspace{-17pt}
\end{figure}


\begin{theorem}\label{thm:tcmonotonicucconot}
TC-Approval is monotonic, and UC-Approval and CO-Approval are not monotonic.
\end{theorem}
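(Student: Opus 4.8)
The plan is to derive Theorem~\ref{thm:tcmonotonicucconot} as an immediate corollary of the three results already established, so the ``proof'' is really a matter of assembling them correctly. The three ingredients are: Theorem~\ref{thm:concorcetconsistentmonotonicity}, which says that for any {\tourproperty{Condorcet consistent}} tournament solution $\ts$, the voting correspondence $\ts$-Approval is monotonic if and only if $\ts$ satisfies both {\tourproperty{exclusive monotonicity}} and {\tourproperty{ENM}}; Lemma~\ref{lem:tocexclusivemonotonicity}, which establishes that the {\topcname} satisfies both of these criteria; and the counterexamples of Figure~\ref{fig:counterexamplecopucENM}, which show that neither the {\copname} nor the {\ucname} satisfies {\tourproperty{ENM}}.

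First I would record that TC, CO, and UC are all {\tourproperty{Condorcet consistent}}, so that Theorem~\ref{thm:concorcetconsistentmonotonicity} applies to each of them; this is noted in the text preceding the theorem as a known fact about these three standard tournament solutions. For the positive direction, I would combine Lemma~\ref{lem:tocexclusivemonotonicity} (TC satisfies both {\tourproperty{exclusive monotonicity}} and {\tourproperty{ENM}}) with the ``if'' direction of Theorem~\ref{thm:concorcetconsistentmonotonicity} to conclude immediately that TC-Approval is monotonic. For the negative direction, I would invoke the ``only if'' direction of Theorem~\ref{thm:concorcetconsistentmonotonicity}: since the Figure~\ref{fig:counterexamplecopucENM} counterexamples show that CO and UC both fail {\tourproperty{ENM}}, each of them fails at least one of the two criteria, and hence CO-Approval and UC-Approval are both non-monotonic.

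There is essentially no obstacle here beyond bookkeeping, since all the hard work is quarantined in Theorem~\ref{thm:concorcetconsistentmonotonicity} and Lemma~\ref{lem:tocexclusivemonotonicity}. The one point worth double-checking is the logical structure of the equivalence in Theorem~\ref{thm:concorcetconsistentmonotonicity}: because monotonicity of $\ts$-Approval is equivalent to the \emph{conjunction} of the two tournament-solution criteria, failing \emph{either} criterion suffices to break monotonicity. Thus for CO and UC it is enough that they fail {\tourproperty{ENM}}; I need not verify whether they also satisfy or fail {\tourproperty{exclusive monotonicity}}, and I should state this carefully so the reader sees that one failed criterion is enough. The only place I would exercise mild caution is making sure the counterexamples in the figure really do exhibit a candidate $c \notin \ts(T)$ together with tournaments $T, T'$ satisfying conditions (1) and (2) of {\tourproperty{ENM}} yet with $\ts(T') \not\subseteq \ts(T)$ and $c \notin \ts(T')$ — i.e.\ that reversing the stated arc is exactly an outneighborhood extension of the relevant non-winning vertex — but this is verified in the figure caption and its accompanying discussion, so I would simply cite it.

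$\mathit{Proof.}$ All three tournament solutions TC, CO, and UC are {\tourproperty{Condorcet consistent}}, so Theorem~\ref{thm:concorcetconsistentmonotonicity} applies to each. By Lemma~\ref{lem:tocexclusivemonotonicity}, the {\topcname} satisfies both the {\tourproperty{exclusive monotonicity}} and {\tourproperty{ENM}} criteria; hence by the ``if'' direction of Theorem~\ref{thm:concorcetconsistentmonotonicity}, TC-Approval is monotonic. On the other hand, the counterexamples in Figure~\ref{fig:counterexamplecopucENM} show that neither the {\copname} nor the {\ucname} satisfies the {\tourproperty{ENM}} criterion. Since monotonicity of $\ts$-Approval requires both criteria to hold, the failure of {\tourproperty{ENM}} alone suffices, and the ``only if'' direction of Theorem~\ref{thm:concorcetconsistentmonotonicity} yields that CO-Approval and UC-Approval are not monotonic. $\qed$
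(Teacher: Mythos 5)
Your proof is correct and follows essentially the same route as the paper: the paper likewise derives Theorem~\ref{thm:tcmonotonicucconot} by combining the characterization in Theorem~\ref{thm:concorcetconsistentmonotonicity} (applicable since TC, CO, and UC are all \tourproperty{Condorcet consistent}) with Lemma~\ref{lem:tocexclusivemonotonicity} for the positive part and the Figure~\ref{fig:counterexamplecopucENM} counterexamples showing CO and UC fail \tourproperty{ENM} for the negative part. Your explicit remark that failing \emph{either} criterion suffices for non-monotonicity is the same (correct) logic the paper uses implicitly.
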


Finally, we study the {Pareto optimal} criterion.
\myvspace{-5pt}

\begin{theorem}\label{thm_TC_Approval_Pareto_CO_UC_Approval_not}
TC-Approval is Pareto optimal, and CO-Approval and UC-Approval are not Pareto optimal.
\end{theorem}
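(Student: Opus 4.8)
The plan is to split into the positive claim for TC-Approval and the two negative claims. For the positive part the crux is a simple domination lemma for the top cycle: whenever $a\trelation b$ in a tournament $T$ and $b\in\topc{T}$, then also $a\in\topc{T}$. This follows at once from the defining property already used in the proof of Lemma~\ref{lem:tocexclusivemonotonicity}, namely $\topc{T}\trelation \mathcal{C}\setminus\topc{T}$: if $a$ lay outside $\topc{T}$ while $b\in\topc{T}$, we would obtain $b\trelation a$, contradicting asymmetry of the tournament since $a\trelation b$. Hence $a\in\topc{T}$.

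Granting this lemma, I would take an arbitrary election $\mathcal{E}=(\mathcal{C},\mathcal{T})$ and two candidates $a,b$ with $a\trelation b$ in every vote $T\in\mathcal{T}$. The lemma gives the inclusion $\{T\in\mathcal{T}\mid b\in\topc{T}\}\subseteq\{T\in\mathcal{T}\mid a\in\topc{T}\}$, so, writing $\ts$ for the top cycle, $\score{a}{\mathcal{E}}{\ts}\geq\score{b}{\mathcal{E}}{\ts}$. I then argue the contrapositive of the Pareto condition, which is equivalent: if $b\in\vc(\mathcal{E})$ then $b$ attains the maximum score, and since $a$'s score is at least $b$'s, $a$ attains the maximum as well, so $a\in\vc(\mathcal{E})$. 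Equivalently $a\notin\vc(\mathcal{E})$ forces $b\notin\vc(\mathcal{E})$, which is exactly Pareto optimality of TC-Approval.

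For the two negative claims I would exhibit single-vote elections, where the winners are precisely the candidates selected by the tournament solution; it then suffices to build one tournament in which $a\trelation b$ holds yet $b$ is selected and $a$ is not. For the \copname, I exploit that membership is governed by outdegree rather than by the pairwise arc: a four-vertex tournament in which $a$ beats $b$ while both out-neighbours of $b$ also beat $a$, so that $b$'s outdegree strictly exceeds $a$'s and $b\in\cop{T}$ but $a\notin\cop{T}$. For the \ucname, I construct a four-vertex tournament in which $a\trelation b$ but $a$ is covered (hence not a king, so $a\notin\uc{T}$) while $b$ remains a king ($b\in\uc{T}$), verified by checking the two-step reachability defining a king for each candidate and ensuring no Condorcet winner is created. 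In both cases the one-vote election gives $a\notin\vc(\mathcal{E})$ and $b\in\vc(\mathcal{E})$, violating Pareto optimality, which establishes that CO-Approval and UC-Approval are not Pareto optimal.

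The positive direction is short once the domination lemma is spotted, and that lemma is its only conceptual ingredient. The main effort lies in the negative direction, and specifically in certifying the uncovered-set example: unlike the Copeland case, where a degree count settles everything, here I must verify the king property candidate by candidate to confirm that precisely $a$ is excluded while $b$ survives, and simultaneously rule out an accidental Condorcet winner that would collapse the uncovered set to a single vertex.
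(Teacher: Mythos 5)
Your proposal is correct and follows essentially the same route as the paper: the positive part rests on the identical domination property of the top cycle (from $\topc{T}\trelation \mathcal{C}\setminus\topc{T}$, $a\trelation b$ and $b\in\topc{T}$ force $a\in\topc{T}$), giving $\score{a}{\mathcal{E}}{TC}\geq\score{b}{\mathcal{E}}{TC}$ and hence Pareto optimality, while the negative claims are handled, as in the paper, by single-vote four-candidate counterexamples. The only cosmetic difference is that the paper reuses one tournament ($a\trelation b$, $\{b\}\trelation\{c,d\}$, $\{c\}\trelation\{a,d\}$, $d\trelation a$) for both CO and UC, with witnessing pairs $(a,b)$ and $(d,a)$ respectively, whereas you build two separate examples; your UC example is realizable (e.g., $a\trelation b$, $c\trelation a$, $d\trelation a$, $c\trelation b$, $b\trelation d$, $d\trelation c$, where $\uc{T}=\{b,c,d\}$), so your plan goes through.
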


\begin{proof}
We first show that TC-Approval is {Pareto optimal}.
Let $\vc=$TC-Approval. Let $\mathcal{E}=(\mathcal{C},\mathcal{T})$
be an election and $a,b\in\mathcal{C}$ be two candidates such that $a\trelation b$ in every $T=(\mathcal{C},\trelation)\in \mathcal{T}$ and $a\not\in \vc(\mathcal{E})$.
We need to prove that $b\not\in \vc(\mathcal{E})$. It is clear that if $a\not\in \topc{T}$ for some $T\in \mathcal{T}$, then $b\not\in\topc{T}$.
Therefore, $\score{a}{\mathcal{E}}{TC}\geq \score{b}{\mathcal{E}}{TC}$.
Since $a\not\in \vc(\mathcal{E})$, there is a candidate $d\in \mathcal{C}\setminus \{a\}$ such that
$\score{d}{\mathcal{E}}{TC}>\score{a}{\mathcal{E}}{TC}$. As a result,
$\score{d}{\mathcal{E}}{TC}>\score{b}{\mathcal{E}}{TC}$, implying
that $b\not\in \vc(\mathcal{E})$.
To prove that CO-Approval and UC-Approval are not Pareto optimal,
one only needs to check the election with four candidates $a,b,c,d$
and one vote with preference
$a\trelation b, \{b\}\trelation \{d,c\}, \{c\}\trelation \{a,d\}$
and $d\trelation a$. It is easy to see that $\{b, c\}$ is the winning set
in CO-Approval, but all votes (in this case only one vote) prefer $a$ to $b$.
On the other hand, $\{a,b,c\}$ is the winning set in UC-Approval, but all votes prefer $d$ to $a$.
\end{proof}

\section{Complexity of Strategic Behavior}
\begin{table*}

\begin{center}\scalebox{1.2}{
\begin{tabular}{|l|c|c|c|c|c|c|} \hline
             &  Plurality & Approval  &        TC-Approval           &         CO-Approval           &         UC-Approval        &    Evidence     \\ \hline\hline

%

CCAV         &    P       &   {\nph}    & {\myresultemp{{\nph}}}    &   {\myresultemp{{\nph}}}    & {\myresultemp{{\nph}}}  &   Theorem~\ref{thm:CCAVCCDV}  \\ \hline

CCDV         &    P       &   {\nph}    & {\myresultemp{{\nph}}}    &   {\myresultemp{{\nph}}}    & {\myresultemp{{\nph}}}  &   Theorem~\ref{thm:CCAVCCDV}  \\ \hline

CCAC         &   {\nph}     &     I     & {\myresultemp{{\nph}}}    &   {\myresultemp{{\nph}}}    & {\myresultemp{{\nph}}}  &   Theorem~\ref{thm:DCACDCDC}  \\ \hline

CCDC         &   {\nph}     &     P     & {\myresultemp{{\nph}}}    &   {\myresultemp{{\nph}}}    & {\myresultemp{{\nph}}}  &   Theorem~\ref{thm:DCACDCDC}  \\ \hline\hline

DCAV         &    P       &     P     &  {\myresultemp{\poly}}      &    {\myresultemp{\poly}}      &  {\myresultemp{\poly}}    &   Theorem~\ref{thm:polymanipulationDCAVDCDV}  \\ \hline

DCDV         &    P       &     P     &  {\myresultemp{\poly}}      &    {\myresultemp{\poly}}      &  {\myresultemp{\poly}}    &   Theorem~\ref{thm:polymanipulationDCAVDCDV}  \\ \hline

DCAC         &   {\nph}     &     P     & {\myresultemp{{\nph}}}    &  {\myresultemp{{\nph}}}     & {\myresultemp{{\nph}}}  &   Theorem~\ref{thm:DCACDCDC}  \\ \hline

DCDC         &   {\nph}     &     I     & {\myresultemp{{\nph}}}    &  {\myresultemp{{\nph}}}     & {\myresultemp{{\nph}}}  &   Theorem~\ref{thm:DCACDCDC}  \\ \hline\hline

CBRA         &    -       &     -     &  {\myresultemp{\poly}}      &   {\myresultemp{{\nph}}}    & {\myresultemp{{\wbh}}}  &   Theorem~\ref{thm:CBRA}  \\ \hline

DBRA         &    -       &     -     &  {\myresultemp{\poly}}      &   {\myresultemp{\poly}}    & {\myresultemp{{\wbh}}}  &   Theorem~\ref{thm:DBRA}  \\ \hline
\end{tabular}}
\end{center}
\caption{
A summary of the complexity of strategic voting problems.
Our results are boldfaced. The results for Plurality and
Approval are from~\protect\cite{DBLP:journals/ai/HemaspaandraHR07}.
In the table, ``{\poly}'' stands for ``polynomial-time solvable'', and ``I''
stands for ``immune''
\protect\footnotemark.
The {\wbhns} results are with respect to the number of arcs
that can be reversed in total.
The {\wbhns} results are based on {\fpt}-reductions, but not polynomial-time reductions.
It remains open whether CBRA and DBRA for UC-Approval are {\nph}.
CBRA and DBRA are not defined for Plurality and Approval. 
All results in the table apply to both the unique-winner model and the nonunique-winner model.}
\label{tab:complexitysummary}
\end{table*}
{\footnotetext{A voting system is {\it{immune}} to a constructive (resp. destructive) strategic voting problem if it is impossible to change a nonwinning (resp. winning) candidate to a winning (resp. nonwinning) candidate by performing the operations imposed in the definition of the problem.}
}

An obstacle to the fairness of voting systems is strategic behavior, potentially carried out by strategic individuals. 
For instance, a strategic individual wants to change the election
result by adding/deleting some voters/candidates, or by bribing some voters.
We refer to~\cite{DBLP:conf/birthday/BetzlerBCN12,handbookofcomsocBrandt2016,DBLP:conf/sofsem/ChevaleyreELM07}
for comprehensive surveys on this topic. In this section, we study the
complexity of strategic behavior in $\ts$-Approval.
\onlyfull{Complexity has been widely recognized as a barrier against strategic behavior~\cite{baumeisterapproval09,DBLP:journals/jair/FaliszewskiHHR09,DBLP:conf/atal/FaliszewskiHS10,DBLP:journals/ai/HemaspaandraHR07,Bartholdi92howhard}.
The point is that if it is {\nph} for the strategic individual to find out
how to successfully change the result, he might give up attacking the election.
Due to this, complexity of strategic behavior of voting systems can be also considered as a property to evaluate the voting systems~\cite{Bartholdi92howhard}.
In this paper, w}\onlyaaai{W}e assume the familiarity of complexity theory.
For readers who are not familiar with complexity theory, we refer to~\cite{garey,DBLP:journals/interfaces/Tovey02}.

We particularly study the control and the bribery problems.
In each problem, there is a {\it{strategic individual}}
who has an incentive to influence the election result by modifying the {\it{registered election}} (see later for explanation of registered election).
Depending on the situations, the strategic individual may have the {\bf{goal}} to
make a given distinguished candidate $p$ win the registered election,
or have the {\bf{goal}} to make $p$ not win the registered election.
The former case is indicated by the word {``{{\it{constructive}}}''}, and
the latter case by the word  {``{\it{destructive}}''}.
Following the convention in the literature~\cite{DBLP:journals/jair/FaliszewskiHHR09,DBLP:journals/ai/HemaspaandraHR07,Yangaamas14a},
for each problem studied in this paper,
we distinguish between the {\it{unique-winner model}} and the {\it{nonunique-winner model}}.
In the unique-winner model, winning an election means to be the unique winner, while
in the nonunique-winner model, winning an election means to be the unique winner
or to be a co-winner. 

Now we explain what modification operations the strategic individual
may perform. 


{\bf{Control.}} In the control problems studied in this paper, the strategic individual may perform 
one of the following four modification operations:  
adding/deleting at most $k$ votes/candidates, where $k>0$ is a given integer.
Therefore, the combination of the two goals and the four modification operations gives us
in total eight control problems denoted by CCAV, CCDV, CCAC, CCDC, DCAV, DCDV, DCAC and DCDC.
The first two characters ``CC''/``DC'' in the notations stand for ``constructive control''/``destructive control'',
and the last two characters ``AV''/``DV''/``AC''/``DC'' stand for
``adding votes''/``deleting votes''/``adding candidates''/``deleting candidates''. 

In the inputs of all control problems, we have a set $\mathcal{C}$ of candidates,
a list $\mathcal{T}$ of votes over $\mathcal{C}$,
a distinguished candidate $p\in \mathcal{C}$, and an integer $k>0$. A registered election consists of all registered candidates and registered votes.
In CCDV/DCDV, all candidates and votes are registered.
The question is whether the strategic individual can achieve
{\his} goal by deleting at most $k$ votes from $\mathcal{T}$.
In CCAV/DCAV, all candidates are registered, but not all votes.
In particular, a sublist $\mathcal{U}\subseteq \mathcal{T}$ of unregistered votes is given in the input.
The question is whether the strategic individual can achieve
{\his} goal by adding (registering) at most $k$ votes in $\mathcal{U}$.
In CCDC/DCDC, all candidates and votes are registered.
The question is whether the strategic individual can achieve
{\his} goal by deleting at most $k$ candidates from $\mathcal{C}\setminus \{p\}$.
It should be pointed out that the deletion of a candidate does not
affect the preference of a vote over the remaining candidates.
In CCAC/DCAC, all votes are registered but not all candidates.
In particular, a subset $\mathcal{D}\subseteq \mathcal{C}\setminus \{p\}$ of unregistered
candidates is given in the input.
The question is whether the strategic individual can achieve
his goal by adding (registering) at most $k$ candidates in $\mathcal{D}$.

The above defined control problems for many voting systems have been extensively studied in the literature. Due to the work of many researchers, the complexity of these control problems for almost all commonly used voting systems is known. A motivation of the study of control problems is that the issues of adding/deleting votes/candidates occur in many electoral settings, see, e.g.,~\cite{handbookofcomsoc2015Cha3Brandt,DBLP:journals/jair/FaliszewskiHH15} for some concrete examples. In addition, as argued in~\cite{DBLP:journals/jair/FaliszewskiHH15}, adding voters pertains to simply encouraging some agents to vote, multiplying the existing
agents, or performing false-name attacks.
We refer to~\cite{Bartholdi92howhard,handbookofcomsoc2015Cha3Brandt,DBLP:journals/jair/FaliszewskiHHR09,DBLP:journals/ai/HemaspaandraHR07,Yangaamas14b,Yangaamas14a} for further discussions on control problems.

{\bf{Bribery.}} We study two bribery problems: {\it{Constructive Bribery by Reversing Arcs}} (CBRA)
and {\it{Destructive Bribery by Reversing Arcs}} (DBRA).
In both problems we are given an election $\mathcal{E}=(\mathcal{C},\mathcal{T})$, a distinguished candidate $p\in \mathcal{C}$, and an integer $k>0$.
The question is whether the strategic individual can achieve {\his} goal
by reversing at most $k$ arcs in total in tournaments in $\mathcal{T}$.

We remark that CBRA and DBRA have already been studied under the name {\it{microbribery}}~\cite{DBLP:journals/jair/FaliszewskiHHR09}. However, the complexity of CBRA/DBRA for $\ts$-Approval has not been studied yet.

The study of bribery problems was initiated by Faliszewski, Hemaspaandra and Hemaspaandra~\shortcite{FaliszewskiHH06}, and since then many bribery problems have been proposed and studied~\cite{DBLP:conf/aaai/BredereckFNT16,DBLP:journals/jair/BredereckFNT16,DBLP:conf/aaai/DeyMN16,DBLP:journals/jair/FaliszewskiHH09,DBLP:conf/atal/KaczmarczykF16,DBLP:conf/aaai/PiniRV13,ECAI2016YangSGDistanceRestrictedBribery,AAMAS15YangSGpartybribery}.
A major motivation of such studies is that bribery behavior in voting happens in many real-world situations, such as in political elections. 
Our results concerning the complexity of control and
bribery problems are summarized in Table~\ref{tab:complexitysummary}. We achieve polynomial-time solvability results, {\nphns} results, as well as {\wbhns} results for the control and bribery problems for ${\ts}$-Approval for different tournament solutions ${\ts}$.
We compare our complexity results for ${\ts}$-Approval with the previous known results for the two most relevant voting systems Plurality and Approval.
Our results reveal that $\ts$-Approval resists more types of strategic behavior than both Plurality and Approval.
 We need the following two problems to establish our hardness results.

\medskip
\noindent{Exact 3 Set Cover} (X3C)\\
{\it{Input:}} A universal set ${U}=\{c_1,c_2,...,c_{3{\mathxxxcssize}}\}$
and a collection ${S}$ of 3-subsets of $U$.\\
{\it{Question:}} Is there an ${S}'\subseteq {S}$
such that $|{S}'|={\mathxxxcssize}$ and each $c_i\in U$
appears in exactly one set of ${S}'$?
\medskip

We assume that each element $c_i\in U$ occurs in
exactly three different 3-subsets of ${S}$. Thus,
we have that $|{S}|=3{\mathxxxcssize}$. This assumption does not change the
{\nphns} of the problem~\cite{DBLP:journals/tcs/Gonzalez85}.

A {\it{dominating set}} of a tournament $T=(V,\trelation)$ is a vertex subset $D$ of the tournament
such that for every vertex $v$ not in $D$, there is a vertex $u$ in $D$ such that
$u\trelation v$.

\medskip
\noindent{\sc{Tournament Dominating Set} {{(TDS)}}}\\
{\it{Input:}} A tournament $T=(V,\trelation)$ and an integer $k>0$.\\
{\it{Parameter:}} $k$.\\
{\it{Question:}} Does $T$ have a dominating set of size at most $k$?
\medskip

It is known that TDS is {\wbh}~\cite{DBLP:series/txcs/DowneyF13}.

\myvspace{-3pt}
\subsection{Complexity of Election Control}
In this section, we study the complexity of control by adding/deleting votes/candidates for $\ts$-Approval for difference tournament solutions $\ts$. 
We first study CCAV and CCDV. We show that both problems for $\ts$-Approval for $\ts$ being several natural tournament solutions are {\nph}, as summarized in the following theorem. Recall that both problems are polynomial-time solvable for Plurality but {\nph} for Approval~\cite{DBLP:journals/ai/HemaspaandraHR07}.

\begin{theorem}\label{thm:CCAVCCDV}
CCAV and CCDV are {\nph}  for $\ts$-Approval for every $\ts\in \{TC, UC, CO\}$,
for both the unique-winner model and the nonunique-winner model.
\end{theorem}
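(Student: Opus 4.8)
The plan is to prove NP-hardness via reductions from X3C (Exact 3-Set Cover), which is the standard source problem for approval-style control hardness. I would handle CCAV and CCDV for all three tournament solutions $\ts \in \{\text{TC}, \text{UC}, \text{CO}\}$, exploiting a key observation: on many tournaments the three solutions \emph{coincide}, so if I design gadget tournaments carefully I can reuse essentially the same construction for all three. The central trick is that for a tournament with a Condorcet winner (a source) $s$, Condorcet consistency forces $\ts(T) = \{s\}$, and more generally I can build ``single-approval'' votes by making a desired candidate the source; such a vote contributes exactly one approval to one candidate under any of TC, UC, CO. This reduces the intransitive-preference setting to something that behaves like ordinary Plurality/Approval on the gadget votes, letting me borrow the flavor of known X3C reductions.

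\textbf{For CCAV:} Given an X3C instance with universe $U = \{c_1,\dots,c_{3\kappa}\}$ and collection $S$ with $|S| = 3\kappa$, I would introduce one candidate per element $c_i$, the distinguished candidate $p$, and possibly a few auxiliary ``padding'' candidates used only to realize the gadget tournaments. The registered votes would be arranged so that $p$ starts with a fixed baseline score, while each element candidate $c_i$ also sits at that baseline. For each set $S_j \in S$ I would create one unregistered vote whose tournament is engineered so that $\ts$ selects exactly the three element candidates in $S_j$ (and nothing else). The budget would be $k = \kappa$. Then adding $\kappa$ votes can raise $p$ above all element candidates in the unique-winner model precisely when the chosen sets form an exact cover: if some $c_i$ is covered twice its score exceeds the threshold, blocking $p$. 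The main design obligation is constructing, for an arbitrary $3$-element subset, a tournament on which TC (equivalently UC and CO) equals exactly that subset --- a small directed triangle among the three chosen candidates, all dominating the rest, while the remaining candidates form a structure keeping them out of the top cycle, accomplishes this since a directed triangle that collectively beats everyone is its own top cycle and uncovered set and has uniform top Copeland score.

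\textbf{For CCDV:} I would use the dual construction. Here all votes are registered and the strategic individual deletes at most $k = \kappa$ votes to make $p$ win. I would set up many votes that boost the element candidates above $p$, with the structure that deleting votes corresponding to an exact cover simultaneously lowers every $c_i$ by exactly the right amount to let $p$ tie or overtake them, whereas any non-cover deletion leaves some $c_i$ too high. The score bookkeeping is again routine once the single-candidate-approval and three-candidate-approval gadgets are in place.

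\textbf{The hard part} will be verifying that a \emph{single} gadget tournament family works uniformly for TC, UC, and CO, since these solutions can diverge on general tournaments, and ensuring the unique-winner/nonunique-winner distinction is handled (likely by adjusting a single baseline vote or adding a tie-breaking padding candidate). I expect the correctness direction ``exact cover $\Rightarrow$ successful control'' to be immediate, while the converse ``successful control $\Rightarrow$ exact cover'' requires the counting argument that double-covering any element pushes its score strictly above $p$'s achievable maximum; this forces each selected set to be disjoint, yielding the exact cover. I would present the TC case in full and then remark that the identical gadgets yield the same scores under UC and CO because the gadget tournaments are designed so the three solutions agree on them, invoking the containment $\cop{T}, \uc{T} \subseteq \topc{T}$ to streamline the coincidence check.
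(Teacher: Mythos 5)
Your CCDV reduction is essentially the paper's: directed-triangle gadgets that dominate all other candidates (on which TC, UC and CO provably coincide), a few votes with $p$ as source, and a counting argument forcing an exact cover. However, your CCAV construction has a genuine, fatal gap. In control by \emph{adding} votes, adding a vote can only weakly increase scores, and it increases $p$'s score only if the added vote approves $p$. You specify that each unregistered vote approves ``exactly the three element candidates in $S_j$ (and nothing else)'', so $p$'s score is frozen at its baseline no matter which votes are added, while the element candidates' scores can only rise. Your key claim that ``adding $\kappa$ votes can raise $p$ above all element candidates'' is therefore impossible: with $p$ and all element candidates tied at the baseline, your instance is trivially a \emph{yes}-instance in the nonunique-winner model (add nothing) and trivially a \emph{no}-instance in the unique-winner model (any addition only hurts $p$), in both cases independently of whether the X3C instance admits an exact cover. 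So the reduction does not work.

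The repair is precisely the extra idea in the paper's construction: the gadget votes must approve $p$ \emph{too}. There, each unregistered vote for $s_i=\{c_x,c_y,c_z\}$ is a tournament in which the five candidates $a(c_x),a(c_y),a(c_z),p,q$ (with $q$ a dummy) induce a \emph{regular} subtournament that dominates all remaining candidates, so that TC, UC and CO all equal this $5$-set; the dummy $q$ is not cosmetic, since a dominating set on which the Copeland set equals the whole set must be regular, hence of odd size, so you cannot just throw $p$ into your triangle and get a $4$-set gadget. The registered votes give each element candidate a head start of $\kappa-1$ (votes with $a(c_i)$ as source) while $p$ gets only $1$. Then adding $t\le\kappa$ unregistered votes gives $p$ score $1+t$, while each $a(c_i)$ gains only the number of added votes whose set covers $c_i$; making $p$ the unique winner forces $t=\kappa$ and each element covered at most once, which with $3\kappa$ elements and three elements per vote means exactly once, i.e., an exact cover. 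This asymmetry between how fast $p$ gains and how slowly each element candidate may gain is the engine of the whole argument, and it is exactly what your construction lacks.
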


\begin{proof}[Sketch]
We first consider  the unique-winner model of CCAV.
The following reduction applies to every $\ts\in \{\text{TC, UC, CO}\}$.
Let $I=(U=\{c_1,c_2,...,c_{3{\kappa}}\},S=\{s_1,s_2,...,s_{3\kappa}\})$ be
an instance of the X3C problem. We create an instance {$(\mathcal{C},\mathcal{T},p\in \mathcal{C},\mathcal{U}\subseteq \mathcal{T},k)$} for the CCAV problem as follows.

{\bf{Candidates $\mathcal{C}$.}} We create in total $3\kappa+2$ candidates.
In particular, for each $c_i\in U$, we create a candidate $a(c_i)$.
In addition, we have a distinguished candidate $p$ and a dummy candidate $q$.

{\bf{Registered votes $\mathcal{T}\setminus \mathcal{U}$.}} For each $c_i\in U$, we create $\kappa-1$ votes,
each represented by a tournament where $a(c_i)$ is the source. The arcs between candidates in $\mathcal{C}\setminus \{a(c_i)\}$
are set arbitrarily. In addition, we create 1 vote defined as a tournament where $p$ is the source. The arcs between candidates in
$\mathcal{C}\setminus \{p\}$ can be set arbitrarily.

{\bf{Unregistered votes ${\mathcal{U}}$.}} For each $s_i=\{c_x,c_y,c_z\}\in S$ where $\{x,y,z\}\subseteq \{1,2,...,3\kappa\}$, we create a vote  represented by a tournament $T_{s_i}$ such that
$a(c_x),a(c_y),a(c_z),p,q$ induce a regular subtournament
and, moreover there is an arc from every candidate in the set $\{a(c_x),a(c_y),a(c_z),p,q\}$ to every candidate not in the set.
The arcs between candidates in $\mathcal{C}\setminus \{a(c_x),a(c_y),a(c_z),p,q\}$ are set arbitrarily.
Observe that for every $\ts\in \{\text{TC, UC, CO}\}$,
it holds that $\ts(T_{s_i})=\{c_x,c_y,c_z,p,q\}$. 

Finally, we set $k=\kappa$. It is easy to see that with the registered votes, each candidate
$a(c_i)$ where $i\in \{1,...,3\kappa\}$ has $\ts$-Approval score $\kappa-1$,
the distinguished candidate $p$ has $\ts$-Approval score $1$, and the
dummy candidate $q$ has $\ts$-Approval score $0$.
Observe that $q$ cannot have an equal or higher $\ts$-Approval score than
that of $p$ no matter which unregistered votes are added, since all
unregistered votes approve both $p$ and $q$.
Observe further that adding any one unregistered vote increases the $\ts$-Approval score
of some candidate $a(c_i)$ where $i\in \{1,...,3\kappa\}$ to $\kappa$.
Hence, in order to make $p$ the unique winner, we need add exactly $k$
unregistered votes. As a result, $p$ has $\ts$-Approval score $\kappa+1$ in the final election.
Moreover, for every $a(c_i)$ where $i\in \{1,...,3\kappa\}$, we can add only one unregistered vote that
approves $a(c_i)$. This happens if and only if there is an exact 3-set cover.

The {\nphns} reduction for the nonunique-winner model is similar to the
above reduction with the difference that 
we create one more registered vote for each $c_i\in U$. 

Now we consider the CCDV problem for the unique-winner model.
We construct an instance $(\mathcal{C},\mathcal{T},p\in \mathcal{C},k)$ as follows.

{\bf{Candidates $\mathcal{C}$.}} We create in total $3\kappa+1$ candidates in $\mathcal{C}$.
In particular, for each $c_i\in U$, we create a candidate $a(c_i)$.
In addition, we have a distinguished candidate $p$.

{\bf{Votes $\mathcal{T}$.}} For each $s_i=\{c_x,c_y,x_z\}\in S$ where $\{x,y,z\}\subseteq \{1,2,...,3\kappa\}$,
we create a vote represented by a tournament $T_{s_i}$ where $a(c_x),a(c_y),a(c_z)$
form a directed triangle and, moreover, there is an arc from every candidate
in $\{a(c_x),a(c_y),a(c_z)\}$ to every candidate in $\mathcal{C}\setminus \{a(c_x),a(c_y),a(c_z)\}$.
It is easy to check that $\ts(T_{s_i})=\{a(c_x),a(c_y),a(c_z)\}$ for
every $\ts\in \{\text{TC, UC, CO}\}$.
In addition, we create $3$ votes,
each represented by a tournament where $p$ is the source.

Since each $c_i$ occurs in exactly three sets in $S$,
each $a(c_i)$ where $i\in \{1,2,...,3\kappa\}$ as well as $p$ has $\ts$-Approval score $3$.
In order to make $p$ the unique winner, for each $a(c_i)$ where
$i\in \{1,...,3\kappa\}$, we need to delete one vote that approves $a(c_i)$.
This happens if and only if there is an exact 3-set cover.

The {\nphns} reduction for the CCDV problem for the nonunique-winner model is
similar to the above reduction,
with only the difference that we create one less vote corresponding to $p$,
so that the $\ts$-Approval score of $p$ in the original election is 2.
\end{proof}
}

Now we consider DCAV and DCDV.
It is known that both problems are polynomial-time solvable for Plurality and Approval~\cite{DBLP:journals/ai/HemaspaandraHR07}.
We prove that both problems are polynomial-time solvable for every $\ts$-Approval where $\ts$ is a polynomial-time computable tournament solution.
A tournament solution $\ts$ is {\it{polynomial-time computable}} if
for every tournament $T$, the set $\ts(T)$ can be calculated in
polynomial time in the size of $T$. It is well known that the
top cycle, the uncovered set and the Copeland set are
all polynomial-time computable~\cite{handbookofcomsoc2015Cha3Brandt}.
It is worth mentioning that there exist numerous tournament solutions such as the tournament equilibrium set which are not polynomial-time computable.
See~\cite{handbookofcomsoc2015Cha3Brandt} for further discussions.

\myvspace{-4pt}
\begin{theorem}\label{thm:polymanipulationDCAVDCDV}
DCAV and DCDV are polynomial-time solvable for $\ts$-Approval
such that $\ts$ is polynomial-time computable,
for both the unique-winner model and the nonunique-winner model.
\end{theorem}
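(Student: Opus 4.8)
The plan is to turn each destructive-control question into a collection of independent single-\emph{spoiler} subproblems and dispatch each one greedily. The structural fact that drives everything is that in $\ts$-Approval the score $\score{c}{\mathcal{E}}{\ts}$ is a sum over the votes of the indicator $\mathds{1}[c\in\ts(T)]$, and $\ts$ is applied to each vote $T$ separately; consequently, adding or deleting a vote changes only that vote's own contribution and leaves the contribution of every other vote untouched. Thus the two relevant score functions, for $c$ and for $p$, decompose additively over votes, with no interaction between the operations.

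First I would note that making $p$ not win is equivalent to producing a single candidate $c\in\mathcal{C}\setminus\{p\}$ whose final score reaches that of $p$: in the nonunique-winner model we need a $c$ with $\score{c}{\cdot}{\ts}>\score{p}{\cdot}{\ts}$, and in the unique-winner model a $c$ with $\score{c}{\cdot}{\ts}\geq\score{p}{\cdot}{\ts}$. The algorithm therefore iterates over all candidates $c\neq p$ and, for each, decides whether the margin $\score{c}{\cdot}{\ts}-\score{p}{\cdot}{\ts}$ can be pushed to (or past) the model-dependent threshold using at most $k$ operations; it accepts exactly when some $c$ succeeds. Since $\ts$ is polynomial-time computable, for a fixed $c$ I can classify each tournament $T$ in the input by the pair $(\mathds{1}[c\in\ts(T)],\mathds{1}[p\in\ts(T)])$ in polynomial time, and each vote contributes $+1$, $-1$, or $0$ to the margin depending solely on this pair.

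With this classification the optimization is immediate. Let $s_c$ and $s_p$ be the base scores of $c$ and $p$ (over the registered votes for DCAV, over the whole list for DCDV). For DCAV, the only votes that raise the margin are the unregistered ones with $c\in\ts(T)$ and $p\notin\ts(T)$, each worth $+1$; every other addable vote is useless or harmful, so the best one can do is add $\min(k,d)$ of these, giving maximum margin $(s_c-s_p)+\min(k,d)$, where $d$ is their number. For DCDV, the only beneficial deletions are the votes with $p\in\ts(T)$ and $c\notin\ts(T)$, each also worth $+1$, so the best attainable margin is $(s_c-s_p)+\min(k,d')$ with $d'$ their count. Comparing the attainable margin against the threshold ($>0$ for the nonunique model, $\geq 0$ for the unique model) decides the subproblem for that $c$.

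The only point that needs justification is optimality of the greedy, and it is exactly the additive-independence observation: because each operation's effect on the margin is a fixed constant determined only by $T$'s class and the operations do not interact, maximizing the margin subject to a budget of $k$ reduces to picking the $k$ operations of largest marginal value, namely the $+1$ operations identified above. The overall running time is polynomial: $O(|\mathcal{C}|)$ outer iterations, each evaluating $\ts$ on $O(|\mathcal{T}|)$ tournaments and performing a linear scan. I do not expect a real obstacle; the mild subtlety is recognizing that the destructive goal collapses to a single spoiler and that the per-vote contributions are independent, which is precisely what makes a greedy argument exact rather than merely heuristic.
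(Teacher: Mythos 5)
Your proof is correct, but it is organized differently from the paper's. The paper takes a shorter, modular route: since $\ts$ is polynomial-time computable, it replaces each tournament vote $T$ by the dichotomous vote $(\ts(T),\mathcal{C}\setminus\ts(T))$, thereby reducing DCAV/DCDV for $\ts$-Approval to DCAV/DCDV for classic Approval, and then simply invokes the known polynomial-time solvability of the latter for both winner models (Hemaspaandra, Hemaspaandra, and Rothe, 2007). You perform the same first step in disguise---classifying each vote by the pair $\bigl(\mathds{1}[c\in\ts(T)],\mathds{1}[p\in\ts(T)]\bigr)$ is exactly the passage to the induced Approval election---but instead of citing the Approval result you re-derive it: the destructive goal collapses to finding a single spoiler $c$ (with the correct model-dependent threshold, strict for nonunique-winner, non-strict for unique-winner), the margin of $c$ over $p$ decomposes additively over votes, and the greedy that adds only votes approving $c$ but not $p$ (respectively deletes only votes approving $p$ but not $c$) is optimal because each operation's effect on the margin is an independent constant in $\{-1,0,+1\}$. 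Both arguments are sound; the paper's buys brevity and modularity by leaning on prior work, while yours is self-contained, makes the algorithm and its polynomial running time explicit, and in effect reconstructs the standard destructive-control algorithm for Approval that the paper uses as a black box.
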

\begin{proof}
To prove the theorem, we reduce DCAV (resp. DCDV) for $\ts$-Approval
where $\ts$ is a polynomial-time computable tournament solution
to the same problem for Approval in polynomial time.
In particular, given an instance of
 DCAV (resp. DCDV)  for $\ts$-Approval,
we calculate $\ts(T)$ for every vote $T$ in the instance
(for both registered and unregistered votes if applicable). Since $\ts$
is polynomial-time computable, this can be done in polynomial time.
Then, we can get an instance of DCAV (resp. DCDV)  for
Approval by taking the same candidate set, and changing
each vote originally defined as a tournament $T$ to a vote
defined as the dichotomous preference $(\ts(T),\mathcal{C}\setminus \ts(T))$,
where $\mathcal{C}$ is the candidate set.
The theorem then follows from the fact that  DCAV (resp. DCDV) 
for Approval is polynomial-time solvable, for both the unique-winner
model and the nonunique-winner model~\cite{DBLP:journals/ai/HemaspaandraHR07}.
\end{proof}

Theorem~\ref{thm:polymanipulationDCAVDCDV} implies that
DCAV and DCDV are polynomial-time solvable for
TC-Approval, CO-Approval and UC-Approval.

Now we turn our attention to control by adding/deleting candidates.
In Plurality, each vote is defined as a linear order over the
candidates and the top ordered candidate is approved.
A linear order can be considered as a transitive tournament,
where there is an arc from a candidate $a$ to another candidate $b$
if $a$ is ordered before $b$. Thus, the top ordered candidate
in the linear order is the source of the transitive tournament.
It is clear that the top cycle,
Copeland set and uncovered set of a transitive tournament consist of exactly the source
of the tournament. Hence, CCAC/CCDC/DCAC/DCDC for Plurality is
a special case of the same problem for $\ts$-Approval for every
$\ts\in \{\text{TC, UC, CO}\}$.
Since CCAC/CCDC/DCAC/DCDC for Plurality is {\nph}~\cite{DBLP:journals/ai/HemaspaandraHR07},
for both the unique-winner model
and the nonunique-winner model, the same problem for $\ts$-Approval for each
$\ts\in \{\text{TC, UC, CO}\}$ is {\nph} as well\footnote{From a parameterized complexity point of view, Yang and Guo~\shortcite{Yangaamas14b} proved that the CCDC problem for Plurality is {\wah} with respect to the number of deleted candidates even in 3-peaked elections.
}, as summarized in the following theorem.

\myvspace{-4pt}
\begin{theorem}\label{thm:DCACDCDC}
CCAC/CCDC/DCAC/DCDC is {\nph} for $\ts$-Approval for every
$\ts\in \{TC, UC, CO\}$, for both the unique-winner model
and the nonunique-winner model.
\end{theorem}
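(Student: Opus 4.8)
The plan is to establish Theorem~\ref{thm:DCACDCDC} by a reduction from the corresponding control problems for Plurality, exploiting the structural observation the authors have already set up immediately before the statement. The key fact is that a linear order can be encoded as a transitive tournament, and that for any transitive tournament $T$ with source $s$, all three tournament solutions collapse to the singleton: $\topc{T}=\uc{T}=\cop{T}=\{s\}$. This means that if every vote in an instance of $\ts$-Approval is a transitive tournament, then the $\ts$-Approval score of a candidate equals exactly its Plurality score. Hence Plurality is literally the restriction of $\ts$-Approval to elections in which all votes are transitive.

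First I would make the embedding explicit. Given an instance of CCAC (respectively CCDC, DCAC, DCDC) for Plurality, where each vote is a linear order, I would build the corresponding instance of the same control problem for $\ts$-Approval by replacing each linear order with its transitive tournament representation, keeping the candidate set, the distinguished candidate $p$, the budget $k$, and (for the adding variants) the set of unregistered candidates $\mathcal{D}$ unchanged. The crucial point to verify is that the transitivity-to-tournament encoding is stable under the control operations: deleting or adding a candidate from a transitive tournament leaves a transitive tournament (the induced subtournament of a transitive tournament is transitive, and adding a registered candidate simply extends the linear order). Thus throughout the control process every vote stays transitive, so the $\ts$-Approval winners coincide exactly with the Plurality winners at every stage. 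This gives an equivalence of yes-instances for both the constructive and destructive goals and for both the unique-winner and nonunique-winner models simultaneously.

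The second step is to cite the established hardness. Since CCAC, CCDC, DCAC, and DCDC for Plurality are all {\nph} in both winner models~\cite{DBLP:journals/ai/HemaspaandraHR07}, and the map above is a polynomial-time (in fact trivial) many-one reduction preserving yes-instances, {\nphns} transfers directly to $\ts$-Approval for each $\ts\in\{\text{TC, UC, CO}\}$. I would state this uniformly, noting that the single embedding handles all four problems and all four $(\text{goal},\text{model})$ combinations at once, since the correctness argument never used anything beyond the collapse of the tournament solution on transitive tournaments.

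I do not expect a serious obstacle here; the theorem is essentially a corollary of the preceding discussion, and the proof reduces to articulating that Plurality is a special case. The only point requiring care is the closure-under-operations check—confirming that candidate deletion and addition preserve transitivity so that the score equality $\score{c}{\mathcal{E}}{\ts}=\score{c}{\mathcal{E}}{\text{Plurality}}$ is maintained after the strategic individual acts, not merely in the input instance. Because the authors explicitly note (just before the theorem) that deleting a candidate does not affect the preference over the remaining candidates, and because induced subtournaments and extensions of transitive tournaments remain transitive, this check is routine, and the main work is simply presenting the reduction cleanly.
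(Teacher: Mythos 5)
Your proposal is correct and is essentially the paper's own argument: the paper proves Theorem~\ref{thm:DCACDCDC} by observing that on transitive tournaments TC, UC, and CO all return the source, so Plurality is a special case of $\ts$-Approval, and the known {\nphns} of CCAC/CCDC/DCAC/DCDC for Plurality~\cite{DBLP:journals/ai/HemaspaandraHR07} transfers directly. Your additional check that candidate addition/deletion preserves transitivity makes explicit what the paper leaves implicit, but the reduction and its justification are the same.
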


\subsection{Complexity of Bribery}
Now we study CBRA and DBRA for $\ts$-Approval for different tournament solutions $\ts$.  
Since reversing an arc may make a transitive tournament intransitive, it does not make sense to study CBRA and DBRA for Plurality and Approval.

\begin{theorem}
\label{thm:CBRA}
CBRA is {\wbh} for UC-Approval, {\nph} for CO-Approval, and polynomial-time solvable for TC-Approval,
for both the unique-winner model and the nonunique-winner model.
\end{theorem}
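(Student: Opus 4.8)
The plan is to prove three separate claims about CBRA, one for each tournament solution. I would organize the proof by handling TC-Approval first (the polynomial case), then CO-Approval ($\np$-hardness), and finally UC-Approval ($\wbhns$), since the last is likely the hardest and most technical.

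For TC-Approval, the key observation is that the top cycle has very clean structure: $\topc{T}$ is the unique top maximal strongly connected component (the set $CC_{\rho(1)}$ from Lemma~\ref{lem:tocexclusivemonotonicity}), and a candidate $c$ belongs to $\topc{T}$ if and only if $c$ can reach every other candidate via a directed path. To make $p$ a winner by reversing few arcs, I would first analyze, for each vote $T$, the minimum number of arc reversals needed to force $p$ into $\topc{T}$ (respectively to force $p$ out of $\topc{T}$ in the destructive case, though that is DBRA). The plan is to show that for a single vote, the cost of pulling $p$ into the top cycle can be computed greedily: if $p$ currently sits in some component $CC_{\rho(i)}$, then reversing a single well-chosen arc from a candidate in $CC_{\rho(1)}$ to $p$ typically suffices to merge $p$ into the dominant component, because of the monotonicity structure established earlier. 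I would argue that the overall problem decomposes vote-by-vote into independent minimum-cost subproblems, and that a simple exchange or flow argument then yields a polynomial algorithm selecting which votes to modify and by how much to guarantee $p$ reaches the top score. The main subtlety is bounding per-vote reversal cost and showing it is efficiently computable.

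For CO-Approval, I would establish $\np$-hardness by a reduction from X3C, reusing the gadget style from Theorem~\ref{thm:CCAVCCDV}. The plan is to build an election in which $p$'s Copeland score is slightly below that of a set of rival candidates $a(c_i)$, and where reversing a single arc in a vote can raise $p$'s outdegree while simultaneously lowering exactly the right rivals' Copeland scores. The cover constraint should translate into the requirement that each element be ``covered'' exactly once by an arc reversal, with the budget $k = \kappa$ forcing an exact cover. Setting up the regular-subtournament gadgets so that Copeland scores are tightly balanced is the delicate part; I would exploit that in a regular tournament all outdegrees are equal, so that small perturbations by arc reversals produce controllable score changes.

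The hardest part is UC-Approval, where I would prove $\wbhns$ by an $\fpt$-reduction from Tournament Dominating Set (TDS), parameterized by the number $k$ of reversed arcs. The key connection is that the uncovered set consists exactly of the kings of the tournament, and a candidate becomes a king precisely when it dominates (in two steps) every other candidate. The plan is to embed a TDS instance $(T, k)$ so that making the distinguished candidate $p$ a king (hence a winner of UC-Approval) corresponds exactly to reversing arcs that witness a size-$k$ dominating set in $T$: each arc reversal should correspond to selecting one dominating vertex, and $p$ wins if and only if the reversed arcs collectively dominate all vertices. I expect the main obstacle to be controlling the uncovered sets of the other votes and the scores of competing candidates so that the king-versus-dominating-set correspondence is exact and the parameter is preserved as required by the $\fpt$-reduction definition; the counterexample in Figure~\ref{fig:counterexamplecopucENM} already signals that the uncovered set behaves irregularly under arc reversals, which is exactly the leverage one needs but also the source of the bookkeeping difficulty.
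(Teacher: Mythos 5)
Your CO-Approval reduction is the part with a genuine gap. The mechanism you describe---``reversing a single arc in a vote can raise $p$'s outdegree while simultaneously lowering exactly the right rivals' Copeland scores''---fails for two reasons. First, at the gadget level: reversing one arc changes exactly two candidates' outdegrees, by $+1$ and $-1$. So if $p$ sits one below the maximum and the three element-candidates of a set-gadget are tied at the maximum, one reversal can at best bring $p$ up to a tie; it can never eject all three rivals from the Copeland set, since ejecting three tied candidates requires raising some \emph{other} candidate strictly above them. Second, and more fundamentally, at the accounting level: if useful reversals increase $p$'s CO-Approval score, the covering structure collapses, because $p$'s gain counts against \emph{every} rival simultaneously---with rivals one point above $p$, a single reversal already ties them all (nonunique-winner model), and with rivals $c$ points above, any $c$ reversals in any $c$ votes suffice, regardless of which sets they correspond to, so the budget $k=\kappa$ no longer forces a cover. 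The paper's construction avoids exactly this: $p$'s score can \emph{never} change (the votes where $p$ or an element-candidate is the source have regular subtournaments, hence are rigid under $\leq k$ reversals, and in each set-gadget $H_{s_i}$ the score gap keeps $p$ far from the Copeland set), and the only effective move is to reverse one arc that makes the dummy set-candidate $a(s_i)$ the \emph{unique} Copeland winner of $H_{s_i}$, which lowers all three element-candidates' CO-Approval scores by one. Coverage is then forced rival-by-rival. Without a ``raise a dummy above the tie, never touch $p$'' device of this kind, your reduction does not go through.

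The other two parts are closer to the mark but differ from the paper in route. For UC-Approval you propose to rebuild an {\fpt}-reduction from TDS from scratch; the paper instead observes that the known result of Yang and Guo---making $p$ a king by reversing at most $k$ arcs is {\wbh}---\emph{is} the nonunique-winner CBRA instance for UC-Approval with a single vote (for the unique-winner model one extra vote with $p$ as source is added), so no multi-vote bookkeeping is needed at all; if you do rebuild it, note that the hard step is handling reversals not incident to $p$, which can create two-step paths for $p$. For TC-Approval, the paper's proof of this theorem actually contains no algorithm (only the claim survives, with the one-reversal-per-vote fact proved inside the DBRA argument), so your sketch is an honest attempt at a missing piece; be aware, though, that the problem does not cleanly ``decompose vote-by-vote'': reversing an arc between $p$ and the top component turns the top cycle into the union of all components down to $p$'s, so intermediate rivals can \emph{gain} score from the same reversal, and other reversals can shrink a top cycle and lower several rivals at once, so a correct exchange argument must handle both effects.
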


\begin{proof}
Yang and Guo~\shortcite{DBLP:conf/aldt/YangG13} studied a problem where
the given are a tournament and a distinguished vertex $p$ in the tournament,
and the question is whether $p$ can be made a king by reversing at most $k$
arcs. In particular, they proved that this problem is W[2]-hard with respect to $k$, by a reduction from the TDS  problem (See~\cite{DBLP:conf/aldt/YangG13} for further details). This problem can be considered as a special case of the
nonunique-winner model of the CBRA problem for UC-Approval, where the instances consist of only one vote.
To prove the {\wbhns} of CBRA for
UC-Approval for the unique-winner model, we need only to create one more vote defined as a tournament
where the distinguished candidate $p$ is the source (the arcs between other
candidates can be set arbitrarily) in the reduction in~\cite{DBLP:conf/aldt/YangG13}.

Now we prove the {\nphns} of CBRA for CO-Approval by a reduction from the X3C problem.
We first study the nonunique-winner model.
Let $(U=\{c_1,c_2,...,c_{3{\kappa}}\},S=\{s_1,s_2,...,s_{3\kappa}\})$ be
an instance of the X3C problem. We create an instance
{$(\mathcal{C},\mathcal{T},p\in \mathcal{C},k=\kappa)$}
of CBRA as follows. Without loss of generality,
assume that $k\geq 4$.

{\bf{Candidates $\mathcal{C}$.}} We create in total $6\kappa+1$ candidates in $\mathcal{C}$. 
For each $c_x\in U$, we create a candidate $a(c_x)$. For each $s_i\in S$, we create a candidate $a(s_i)$. In addition, we have a distinguished candidate $p$.

{\bf{Votes $\mathcal{T}$.}} We create in total $3k^2+4k+2$ votes. For ease of exposition, we divide the votes into three sublists $A,B,C$. The sublist $A$ consists of the following votes. For each $s_i=\{c_x,c_y,c_z\}\in S$, we create a vote defined as a tournament $H_{s_i}=(\mathcal{C},\trelation_{s_i})$ such that
(1) $a(c_x) \trelation_{s_i} a(c_y)$, $a(c_y)\trelation_{s_i} a(c_z)$, $a(c_z)\trelation_{s_i} a(c_x)$;
(2) there is an arc from $a(s_i)$ to every of $\{a(c_x),a(c_y),a(c_z)\}$;
(3) there is an arc from every $\{a(c_x),a(c_y),a(c_z)\}$ to every
candidate in $\mathcal{C}\setminus \{a(s_i),a(c_x),a(c_y),a(c_z)\}$;
(4) there is an arc from $a(s_i)$ to every candidate in
$\mathcal{C}\setminus \{a(s_i),a(c_x),a(c_y),a(c_z),a(c_{u}),a(c_v)\}$,
and an arc from each of $\{a(c_u), a(c_v)\}$ to $a(s_i)$, where $a(c_{u})$
and $a(c_v)$ are any two arbitrary candidates in
$\mathcal{C}\setminus \{a(s_i),a(c_x),a(c_y),a(c_z),p\}$; and
(5) $H_{s_i}[\mathcal{C}\setminus \{a(s_i),a(c_x),a(c_y),a(c_z)\}]$ is regular.
It is easy to verify that in $H_{s_i}$, the Copeland score of
each $a(s_i),a(c_x),a(c_y),a(c_z)$ is $6k-2$, and of each other candidate
is at most $\lceil\frac{6k-3}{2}\rceil+2$. Thus, the Copeland
set of $H_{s_i}$ is $\{a(s_i),a(c_x),a(c_y),a(c_z)\}$. Moreover, due to the large
score gap between candidates in the Copeland set and candidates not in the
Copeland set (and due to $k\geq 4$), no candidate in
$\mathcal{C}\setminus \{a(s_i),a(c_x),a(c_y),a(c_z)\}$ can be included in the
Copeland set of $H_{s_i}$ by reversing at most $k$ arcs.
The sublist $B$ consists of $k+2$ votes, each of which is defined as a tournament
such that $p$ is the source, and the subtournament induced by
$\mathcal{C}\setminus \{p\}$ is regular. Finally, the sublist $C$
consists of the following $3k^2$ votes. For each $c_x\in U$, we
create $k$ votes, each of which is defined as a tournament such that $a(c_x)$
is the source and the subtournament induced by $\mathcal{C}\setminus \{a(c_x)\}$
is regular. Notice that due to the regularity of the subtournaments
induced by all candidates except the sources in all tournaments constructed
in sublists $B$ and $C$, it is impossible to change the Copeland set of
every tournament in $B\cup C$ by reversing at most $k$ arcs.
It is easy to verify that the CO-Approval score of $p$ is $k+2$,
of each $a(c_x)$ where $c_x\in U$ is $k+3$, and of each $a(s_i)$ where $s_i\in S$ is $1$.

Now we prove the correctness of the reduction.

$(\Rightarrow:)$ Suppose that $S'\subset S$ is an exact 3-set cover, i.e.
$|S'|=\kappa=k$ and for every $c_x\in U$ there is exactly one $s\in S'$ such
that $c_x\in s$.
We shall show that we can make $p$ a winner by reversing at most $k$ arcs.
In particular, for each $s_i=\{a_x,a_y,a_z\}\in S'$, we reverse the arc from $a(s_i)$ to $a(c_u)$, where $a(c_u)$
is one candidate in $\mathcal{C}\setminus \{a(s_i),a(c_x),a(c_y),a(c_z),p\}$ such that
$a(c_u)\trelation_{s_i} a(s_i)$, as defined above. After reversing this arc,
the Copeland set of $H_{s_i}$ consists of only the candidate $a(s_i)$.
As a result, the CO-Approval score of each $\{a(c_x),a(c_y),a(c_z)\}$ decreases by one.
Since $S'$ is an exact 3-set cover, due to the construction, after reversing
all these $k$ arcs, the CO-Approval score of each $a(c_x)$ where $c_x\in U$
is $k+2$ and of each
$a(s_i)$ where $s_i\in S$ is $1$. Moreover, the CO-Approval score of $p$ is $k+2$, implying that $p$ is a winner.

$(:\Leftarrow)$ Observe that due to the score gap between every $a(s_i)$ where $s_i\in S$ and $p$, none of $a(s_i)$ has a chance to have a CO-Approval score that is equal to or higher than that of $p$ by reversing at most $k$ arcs. Moreover, as discussed above, we cannot include $p$ in the Copeland set of each tournament in $A\cup C$ by reversing at most $k$ arcs. Therefore, in order to make $p$ a winner, for each $a(c_x)$ where $c_x\in U$, we need to decrease the number of tournaments whose Copeland sets include $a(c_x)$ by at least one. Due to the above discussion, we cannot change the Copeland set of each tournament in $B\cup C$ by reversing at most $k$ arcs. Thus, the optimal solution is to reverse  arcs in tournaments in $A$. Moreover, if we attempt to reverse arcs in some tournament $H_{s_i}$ in $A$, the optimal choice is to only reverse an arc from $a(s_i)$ to a candidate $a(c_u)\in \mathcal{C}\setminus \{a(s_i),a(c_x),a(c_y),a(c_z),p\}$ such that $a(s_i)\trelation_{s_i} a(u)$ in $H_{s_i}$ (such a candidate $a(c_u)$ exists due to the construction of the votes), so that the CO-Approval score of each $a(c_x),a(c_y),a(c_z)$ decreases by one. Let $A'$ be the set of tournaments in $A$ where an arc is reversed. Let $S'=\{s_i\mid H_{s_i}\in A'\}$. Due to the above discussion, for every $c_x\in U$, there is at least one $H_{s_i}\in A'$ such that $c_x\in s_i$. Since we can reverse at most $k$ arcs, it holds that $|A'|\leq k$, implying that $S'$ is an exact 3-set cover.

To prove the unique-winner model of the problem,
we need only to create one more vote in $B$ in the above reduction.
\end{proof}

Now we study destructive bribery by reversing arcs.

\begin{theorem}\label{thm:DBRA}
DBRA is {\wbh} for UC-Approval, and polynomial-time solvable for TC-Approval and CO-Approval.
The results hold for both the unique-winner model and the nonunique-winner model.
\end{theorem}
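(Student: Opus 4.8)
The statement has three independent parts: \wbhns{} of DBRA for UC-Approval, and polynomial-time solvability for TC-Approval and CO-Approval. For the hardness part I would reuse the machinery behind Theorem~\ref{thm:CBRA}: Yang and Guo's ``make a designated vertex a king by reversing at most $k$ arcs'' problem is \wbh{} by a reduction from TDS, and a single vote of UC-Approval realizes exactly this king-making operation. The two tractability parts rest on the same observation: in a \emph{destructive} problem it suffices to push \emph{some} competitor above $p$, and for the top cycle and the Copeland set the cost of promoting a candidate within a single vote is local and cheaply computable, so the whole task separates across votes and can be optimized greedily.

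\textbf{W[2]-hardness for UC-Approval.} I would reduce from TDS, parameterized by $k$, so that the bribery budget equals the target dominating-set size (an \fpt-reduction). Given a TDS instance, build one vote $T_1$ equal to the Yang--Guo gadget, but with a \emph{competitor} $b$ (rather than $p$) as the vertex to be made a king: $b$ is not a king of $T_1$, $p\notin\uc{T_1}$, and $b$ can be made a king of $T_1$ by reversing at most $k$ arcs if and only if the input tournament has a dominating set of size at most $k$. Surround $T_1$ with ``anchor'' votes in which $p$, and, to fix $b$'s starting score, $b$, is the source; by Condorcet consistency the uncovered set of an anchor vote is exactly its source. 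Tuning the numbers of anchor votes makes $p$ and $b$ tied co-winners, so that the only budget-$k$ way to make $p$ a non-winner is to add $b$ to $\uc{T_1}$. To force this, the anchor tournaments are taken large and regular outside their source, so that (i) removing $p$ from the uncovered set of any anchor vote, and (ii) promoting any candidate other than $b$ in any anchor vote, each cost far more than $k$ reversals; and no candidate other than $b$ can reach $p$'s score by spending its budget only on $T_1$. The passage to the unique-winner model is the one used in Theorem~\ref{thm:CBRA}: add one extra anchor vote with $p$ as the source, so that $p$ leads by one and making $b$ a king merely ties them. The main obstacle is exactly this ``no cheaper attack'' bookkeeping---verifying that every successful bribery of size at most $k$ must make $b$ a king of $T_1$, and hence decodes a dominating set.

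\textbf{Polynomial-time algorithm for TC-Approval.} The engine is a local lemma: for any vote $T$ and any candidate $c\notin\topc{T}$, reversing a single arc from some $u\in\topc{T}$ to $c$ already places $c$ into the top cycle, because afterwards $c$ reaches $u$, $u$ reaches all of $\topc{T}$, $\topc{T}$ reaches everyone, and every vertex of $\topc{T}$ still reaches $c$, so $c$ joins the top strongly connected component. Hence promoting a competitor into a vote's top cycle costs exactly one reversal, whereas evicting $p$ from a top cycle costs at least one. Since making $p$ a non-winner only requires driving some fixed competitor $c$ to score more than $p$ (nonunique model) or at least as much (unique model), and each unit of ``raise $c$'' costs one while each unit of ``lower $p$'' costs at least one and narrows the same gap, an optimal bribery may as well spend its budget raising a single best competitor. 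The algorithm iterates over all $c\neq p$, computes for each the minimum number of reversals to close the score gap using the cheapest votes (those with $c\notin\topc{T}$), and returns the minimum over $c$; this is polynomial. Degenerate cases (e.g.\ a competitor already lying in every vote's top cycle) are handled by also pricing, independently, the removal of $p$ from individual top cycles.

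\textbf{Polynomial-time algorithm for CO-Approval.} The same separation works, with out-degrees replacing reachability. In a single vote, adding $c$ to $\cop{T}$ means raising $c$'s out-degree to the current maximum; the cheapest way reverses arcs pointing into $c$ from the current top-scoring candidates, since each such reversal raises $c$ by one and lowers a maximizer by one, closing the out-degree gap by two. Thus the minimum cost to make $c\in\cop{T}$, and symmetrically to make $p\notin\cop{T}$, is computable in polynomial time per vote, and in fact for each of the constantly many combinations of $c$'s and $p$'s membership, which may share reversals (reversing the arc between $p$ and $c$). Because the votes are independent, for a fixed competitor $c$ the task of reaching the required total advantage over $p$ with fewest reversals is a separable optimization with bounded per-vote integer gains, solvable by a simple greedy or dynamic program; optimizing over $c$ gives the answer. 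This explains the asymmetry with the constructive case of Theorem~\ref{thm:CBRA}, where one must advance the \emph{fixed} candidate $p$ and the requirement no longer decomposes per vote, leaving room for the X3C hardness.
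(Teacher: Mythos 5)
Your overall architecture for the UC-Approval reduction (one TDS-derived vote in which a competitor must be made a king, padded with votes that fix the scores) matches the paper's, but your padding has a fatal gap. In any tournament in which $p$ is the source, \emph{every} candidate $b'$ can be made a king by reversing the single arc from $p$ to $b'$: afterwards $b'$ beats $p$, and $p$ still beats all of $\mathcal{C}\setminus\{p,b'\}$, so $b'$ reaches every candidate in at most two steps---no matter how the rest of the tournament is wired, so taking it ``large and regular'' does not help. Hence your claim (ii) is false, and, worse, your case analysis never excludes the briber promoting $b$ itself inside a $p$-source anchor vote. Since your construction leaves $p$ and $b$ tied, reversing the one arc $(p,b)$ in a single $p$-source anchor vote lifts $b$ into the uncovered set of that vote, leaves $p$'s approvals unchanged, and already makes $p$ a non-winner; so for every $k\geq 1$ each instance you produce is a {\yesins}, independently of the TDS instance. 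This is exactly the difficulty the paper's proof is engineered around: source-votes are safe for supplying the \emph{competitor's} score (the paper's sublist $\mathcal{T}_2$, with $q$ as source), but the votes supplying $p$'s score cannot be source-votes. The paper instead builds tournaments $H_i$ (the sublist $\mathcal{T}_3$) with $\uc{H_i}=\{p\}\cup A_i$, where blocks $A_0,\dots,A_{2k+2}$ of size $k+1$ are linked by bijections $f_i$ so that each member of $A_i$ has only a prescribed, sparse set of in-neighbors and $p$ does not dominate $A_i$; this eliminates any ``gateway'' vertex and is what makes entering $\uc{H_i}$ cost more than $k$ reversals. Without such a gadget, or a substitute for it, your reduction does not establish hardness.

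The remaining parts are close to the paper or not comparable. For TC-Approval you use the same decomposition over competitors $q$ and the same one-reversal promotion claim; the paper additionally argues (via $m\geq 3$) that a single reversal can never close the $p$-versus-$q$ gap by two and on that basis only ever raises $q$, whereas you also price evicting $p$ from individual top cycles---a reasonable extra case, but you would need to say how that price is computed (it is a constrained cut-type computation, and one must keep the profiting competitor inside the top cycle), which you do not. For CO-Approval the paper's text defers the algorithm to its full version, so there is nothing to compare against; note only that your claim that reversing an arc into $c$ from a maximizer ``closes the out-degree gap by two'' fails as soon as $\cop{T}$ contains more than one candidate, so your per-vote cost computation needs a more careful argument than the one you give.
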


\begin{proof}
We first prove the {\wbhns} of DBRA for UC-Approval for the nonunique-winner model.
We develop an {\fpt}-reduction from the TDS problem to the DBRA problem.
Let $(T=(V,\succ),k)$ be an instance of the {{TDS}} problem. Let $n=|V|$. We assume that $n\geq (k+1)(2k+4)$. This assumption does not affect the {W[2]-hardness} of the {{TDS}} problem\footnote{If $n\leq (k+1)(2k+3)$, we can add $(k+1)(2k+4)-n$ additional vertices to the tournament such that there is an arc from every vertex in the original tournament to every newly added vertex.
}.
We create an instance $\mathcal{I}=(\mathcal{C},\mathcal{T},p\in \mathcal{C},k)$ for DBRA as follows.

{\bf{Candidates $\mathcal{C}$.}} For each $v\in V$, we create a candidate $a(v)$. Let the distinguished candidate $p$ be any arbitrary candidate $a(w)$ such that $w$ is not a king in $T$ (in the W[2]-hardness reduction of the {TDS} problem in~\cite{fellowsfeasibility92}, there exist vertices $w$ which are not kings. Thus, such a candidate $a(w)$ is well defined\comments{we need to ensure that there is an isolated clique, say a triangle, in the graph $G$ in the proof of Theorem~4.1 in~\cite{fellowsfeasibility92}. In this case, each vertex in the isolated clique will not be a king.}). In addition, we create an additional candidate $q$. Thus, $\mathcal{C}=\{a(v)\mid v\in V\}\cup \{q\}$.

{\bf{Votes $\mathcal{T}$.}} The list $\mathcal{T}$ of votes consists of three sublists $\mathcal{T}_1,\mathcal{T}_2,\mathcal{T}_3$ of votes. The list $\mathcal{T}_1$ consists of only one vote $T_1=(\mathcal{C},\trelation_1)$, which is created first with a copy of $T$, i.e., $a(v)\trelation_1 a(u)$ in $T_1$ if and only if $v\trelation u$ in $T$. Then, we create an arc from every candidate in $\mathcal{C}\setminus \{q\}$ to $q$.
The list $\mathcal{T}_2$ consists of $2k+3$ votes, each of which is defined as a tournament such that $q$ is the source. The arcs between every two candidates in $\mathcal{C}\setminus \{q\}$ are set arbitrarily. The construction of the votes in $\mathcal{T}_3$ is a little involved. Let $A$ be any arbitrary $(k+1)(2k+3)$-subset of $\mathcal{C}\setminus \{p,q\}$, and $B=\mathcal{C}\setminus (A\cup \{p,q\})$. Let $(A_0,A_1,...,A_{2k+2})$ be any arbitrary partition of $A$ such that $|A_i|=k+1$ for every $i\in \{0,1,...,2k+2\}$. Let $f_i$ be any arbitrary one-to-one mapping from $A_i$ to $A_{(i+1)\mod (2k+3)}$ for every $i\in \{0,1,...,2k+2\}$. We create in total $2k+3$ votes in $\mathcal{T}_3$. In particular, for each $i\in \{0,1,...,2k+2\}$, we create a vote defined as a tournament $H_i$ such that

(1) there is an arc from $p$ to every candidate in $\mathcal{C}\setminus (A_i\cup \{p\})$;

(2) there is an arc from a candidate $a(v)\in A_{(i+1)\mod (2k+3)}$ to a candidate $a(u)\in A_i$ if and only if $f_i(a(u))=a(v)$;

(3) there is an arc from every candidate in $A_i$ to every candidate in $\mathcal{C}\setminus (A_i\cup A_{(i+1)\mod (2k+3)})$;

(4) there is an arc from every candidate in $\mathcal{C}\setminus \{q\}$ to $q$;

(5) $H_i[A_i]$ is isomorphic to $H_i[A_{(i+1)\mod (2k+3)}]$; and

(6) there is no source in $H_i[A_i]$.

The arcs that are not specified above are set arbitrarily.
It is clear that such a tournament can be constructed in polynomial time. Observe that $\{p\}\cup \{A_i\}$ is the uncovered set of $H_i$. \comments{every candidate in $B$ cannot reach $p$. since there is no source in $H_i$, for every candidate $a(v)\in A_{(i+1)\mod 2k+3}$ cannot reach at least one candidate in $H_i$.}
Moreover, by constructing the votes this way, it is impossible to make $q$ a king in each tournament $H_i$ in $\mathcal{T}_3$ by reversing at most $k$ arcs. \comments{since to the cost of making $q$ reach each candidate in $A_i$ is at least one arc reversal.}

Now we show the correctness of the reduction. It is easy to calculate that in the election, both $p$ and $q$ have the same UC-Approval score $2k+3$, and every other candidate has UC-Approval score at most $2$. Thus, $\{p,q\}$ is the UC-Approval winning set of the election.

$(\Rightarrow:)$ Let $D\subseteq V(T)$ be a dominating set of size at most $k$ of $T$. After reversing all arcs $\arc{a(v)}{q}$ where $v\in D$ in $T_1$, $q$ becomes a king in $T_1$. As a result, $q$ has UC-Approval score $2k+4$ and $p$ still has UC-Approval score $2k+3$, implying that $p$ is no longer a winner.

$(:\Leftarrow)$ Suppose that $\mathcal{I}$ is a {\yesins}.
Observe that due to the large UC-Approval score gap between $p$ and every candidate in $\mathcal{C}\setminus \{p,q\}$, none of $\mathcal{C}\setminus \{p,q\}$ has a chance to have a higher score than that of $p$ by reversing at most $k$ arcs. Therefore, $q$ is the only candidate which can prevent $p$ from being a winner.
Moreover, due to the above discussion, it is impossible to make $q$ a king in each vote in $\mathcal{T}_3$ by reversing at most $k$ arcs. Given that $q$ is the unique king in each tournament in $\mathcal{T}_2$, in order to prevent $p$ from being a winner, $q$ has to become a king in the first vote $T_1$ by reversing at most $k$ arcs. This happens only if there is a dominating set of $T$ of size at most $k$
.

To prove the unique-winner model, we need only to create one less tournament in $\mathcal{T}_2$. 

Now we develop \onlyaaai{a} polynomial-time algorithm\onlyfull{s} for DBRA for TC-Approval\onlyaaai{ for the nonunique-winner model}\onlyfull{ and CO-Approval}. 

{\bf{TC-Approval.}} Let $(\mathcal{C},\mathcal{T},p\in \mathcal{C}, k)$ be an instance where $p$ is a TC-Approval winner. Let $m=|\mathcal{C}|$. We assume that $m\geq 3$ (otherwise, we can easily solve the problem). To prevent $p$ from being a winner, we need to make a candidate $q$ have a higher TC-Approval score than that of $p$ by reversing at most $k$ arcs. Based on this observation, the algorithm breaks down the given instance into $m-1$ subinstances, each of which takes a candidate $q\neq p$ together with $(\mathcal{C},\mathcal{T},p\in \mathcal{C}, k)$ as the input, and asks whether $q$ can have a higher TC-Approval score than that of $p$ by reversing at most $k$ arcs. Obviously, the original instance is a {\yesins} if and only if at least one of the subinstances is a {\yesins}. It remains to develop a polynomial-time algorithm to solve each subinstance. The following claim is useful.

Claim. Let $H=(V,\trelation)$ be a tournament and $c\in V$ a candidate not in the top cycle of $H$.
Then, we can make $c$ be included in the top cycle of $H$ by reversing only one arc.

{\it{Proof of the claim.}} Let
$CC_{\rho(1)},CC_{\rho(2)},...,CC_{\rho(t)}$ be the unique ordering of the maximal strongly connected components of $H$
such that $CC_{\rho(i)}\trelation CC_{\rho(j)}$
for every $1\leq i< j\leq t$.
The top cycle
of $H$ is exactly $CC_{\rho(1)}$. Assume that
$c\in CC_{\rho(i)}$ for some $1< i\leq t$. Then,
by reversing any one arbitrary arc between $c$ and a candidate in $CC_{\rho(1)}$,
$\bigcup_{1\leq j\leq i} V(CC_j)$ becomes the
top cycle of the tournament (see the proof of Lemma~\ref{lem:tocexclusivemonotonicity} for some additional details). This completes the proof of the above claim.

Since $m\geq 3$, it is impossible to decrease the score gap between $p$ and $q$ by 2 by reversing 1 arc.
Then, due to the above claim,
to prevent $p$ from being a winner, an optimal choice is to reverse
arcs in the tournaments whose top cycles do not include $q$ in advance.
Precisely, the algorithm finds all tournaments in the subinstance where $q$ is not in the top cycle. Let $k'$ be the number of such tournaments. Then, due to the above discussion, we can increase the TC-Approval score of $q$ by $\min\{k,k'\}$ without changing the TC-Approval score of $p$, by reversing $\min\{k,k'\}$ arcs.
As a result, if $\score{q}{\mathcal{E}}{TC}+\min\{k,k'\}\geq \score{p}{\mathcal{E}}{TC}$ where $\mathcal{E}=(\mathcal{C},\mathcal{T})$, the subinstance is a {\yesins}; otherwise, it is a {\noins}.
\end{proof}

It should be noted that Papadimitriou and Yannakakis~\cite{DBLP:journals/jcss/PapadimitriouY96} devised an $O(n^{O(\log{n})})$-time algorithm for the {{TDS}} problem, which implies that the {{TDS}} problem is probably not {\nph}, unless {\np}$\subseteq$ {\sc{Dtime}}$(n^{\log{n}})$. In fact, Downey and Fellows proved the {\wbhns} of the {TDS} problem by a reduction from the {\it{Dominating Set}} problem which is both {\nph} and {\wbh}~\cite{fellowsfeasibility92}. However, the reduction is an {\fpt}-reduction but not a polynomial-time reduction---it takes $O(2^{O(k)}\cdot poly(n))$ time where $n$ is the number vertices of the given tournament and $k$ is the solution size. Hence, our reductions in the proofs of Theorems~\ref{thm:CBRA} and \ref{thm:DBRA} do not imply that DBRA and CBRA for UC-Approval are {\nph}. Whether DBRA and CBRA for UC-Approval are {\nph} remain open.

\section{Concluding Remarks}
\myvspace{-4pt}
We have studied a class of approval-based voting correspondences for
the scenario where voters may have intransitive preferences.
Each newly introduced voting correspondence $\ts$-Approval is a natural combination of the classic Approval correspondence and a well-studied tournament solution $\ts$.
In particular, each voter with preference $T$ is assumed to approve all
candidates in $\ts(T)$ and disapprove all the remaining candidates. The
winners are the ones receiving the most approvals. This class of new
voting correspondences extends the classic Approval voting to the settings where
voters have intransitive preferences. Note taht an intransitive preference is not necessarily to be cast by a single voter, but can be drawn from the preferences of voters in a subvoting, as we illustrated in the introduction. As far as we know, such $\ts$-Approval voting correspondence has not been studied in the literature.
In this paper, we first showed that $\ts$-Approval satisfies
several axiomatic properties for $\ts\in \{\text{CO,TC,UC}\}$. As a byproduct, we proposed two new concepts of monotonicity criteria of tournament
solutions, namely, the {\tourproperty{exclusive monotonicity}} and the {\tourproperty{ENM}}, and proved that the top cycle satisfies both monotonicity criteria, while the Copeland set and the uncovered set fail to satisfy  {\tourproperty{ENM}}.
Then, we investigated the complexity of
constructive/destructive control by adding/deleting voters/candidates and
constructive/destructive bribery by reversing arcs for $\ts$-Approval for $\ts\in \{\text{CO,UC,TC}\}$.
Our results reveal that $\ts$-Approval
resists more types of strategic behavior than both Plurality and Approval.
See Table~\ref{tab:complexitysummary} for a summary of our complexity results.


There remain several open questions{ for future research}. For instance, is top cycle the minimal tournament solution that satisfies {\tourproperty{neutrality}}, {\tourproperty{exclusive monotonicity}} and {\tourproperty{ENM}}? In addition, it is interesting to explore whether CBRA and DBRA for UC-Approval are {\nph}.

\bibliographystyle{plain}

\end{document}